\newcommand{\Appendix}[1]{the full version for}
\renewcommand{\b}{\mathbf{b}}
\renewcommand{\c}{\mathbf{c}}
\renewcommand{\u}{\bm{u}}
\renewcommand{\v}{\bm{v}}
\newcommand{\x}{\bm{x}}
\newcommand{\y}{\bm{y}}
\newcommand{\z}{\bm{z}}
\newcommand{\X}{\bm{X}}
\newcommand{\Z}{\bm{Z}}
\newtheorem{theorem}{Proposition}
\definecolor{codegreen}{rgb}{0,0.6,0}
\definecolor{codegray}{rgb}{0.5,0.5,0.5}
\definecolor{codepurple}{rgb}{0.58,0,0.82}
\definecolor{backcolour}{rgb}{0.95,0.95,0.92}
\lstdefinestyle{mystyle}{
    backgroundcolor=\color{backcolour},   
    commentstyle=\color{codegreen},
    keywordstyle=\color{magenta},
    numberstyle=\tiny\color{codegray},
    stringstyle=\color{codepurple},
    basicstyle=\ttfamily\footnotesize,
    breakatwhitespace=false,         
    breaklines=true,                 
    captionpos=b,                    
    keepspaces=true,                 
    numbers=left,                    
    numbersep=5pt,                  
    showspaces=false,                
    showstringspaces=false,
    showtabs=false,                  
    tabsize=2
}
\title{NFNet: Non-interacting Fermion Network for Efficient Simulation of Large-scale Quantum Systems}
\author{%
}
\begin{document}

\maketitle
\begin{abstract}
    We present NFNet \cite{NFNet-github}, a PyTorch-based framework for polynomial-time simulation of large-scale, continuously controlled quantum systems, supporting parallel matrix computation and auto-differentiation of network parameters. It is based on the non-interacting Fermionic formalism that relates the Matchgates by Valiant \cite{valiant} to a physical analogy of non-interacting Fermions in one dimension as introduced by Terhal and DiVincenzo \cite{Terhal_2002}. Given an input bit string $\x$, NFNet computes the probability $p(\y|\x)=\langle x|U_{\theta}^\dagger \Pi_y U_\theta |x\rangle$ of observing the bit string $\y$, which can be a sub or full-system measurement on the evolved quantum state $U_{\mathbf{\theta}}|x\rangle$, where $\mathbf{\theta}$ is the set of continuous rotation parameters, and the unitary $U_{\mathbf{\theta}}$'s underlying Hamiltonians are not restricted to nearest-neighbor interactions. We first review the mathematical formulation of the Matchgate to Fermionic mapping with additional matrix decomposition derivations, and then show that on top of the pair-wise circuit gates documented in Terhal and DiVincenzo \cite{Terhal_2002}, the Fermionic formalism can also simulate evolutions whose Hamiltonians are sums of arbitrary two-Fermion-mode interactions. We then document the design philosophy of NFNet, its software structure, and demonstrate its usage in tasks such as: \hyperref[subsec:ham]{i)} simulating measurements on continuously evolved quantum states; \hyperref[sec:multicircuit-512]{ii)} modeling 512+ qubit multi-layer continuously-controlled variational quantum circuits; \hyperref[subsec:maxcut]{iii)} finding ground states of classical Hamiltonians such as the weighted-edge Maxcut problem;  \hyperref[subsec:196-mnist]{iv)} training a continuously controlled circuit of $196$ qubits to memorize a binary $14\times 14$ MNIST hand-written digit pattern; and \hyperref[sec:multicircuit-512]{v)} benchmarking Fermionic simulation runtimes of measuring output quantum states of $100$ to $1000$ qubits. As NFNet is both an efficient large-scale quantum simulator, and a quantum-inspired classical computing network structure, many more exciting topics are worth exploring, such as its connection to recurrent neural networks, discrete generative learning and discrete normalizing flow. NFNet source code can be found at \href{https://github.com/BILLYZZ/NFNet}{https://github.com/BILLYZZ/NFNet}.
\end{abstract}

\section{Introduction}
Matchgates were first proposed as a class of two-qubit gates generated by a restricted subset of Pauli's and matchgate circuits were shown to be classically simulable by Valiant \cite{valiant}. Terhal and DiVincenzo \cite{Terhal_2002} further showed that matchgates relate to non-interacting Fermions and extended the simulable interaction to non-nearest-neighbor Fermion modes and conditionally applied circuit gates. Terhal and DiVincenzo \cite{Terhal_2002} systematically derived the simulation of circuit measurements via the computation of Pfaffians, following the decomposition of an anti-symmetric matrix. Other works used various formulations to study the matchgate structures \cite{linearoptics-3} \cite{linearoptics-4} \cite{linearoptics-5}\cite{linearoptics-6}. Moreover, the boundary between simulable matchgates and universal quantum computing is \textit{seemingly} thin and simple. Brod \cite{Brod_2016} provided a good review of these additional resources required to achieve universal computation: the swap gate \cite{linearoptics-6}\cite{linearoptics-8}, two-qubit nondemolition measurements \cite{linearoptics-9}, multi-qubit magic states \cite{linearoptics-10}, parity-preserving two-qubit gates \cite{linearoptics-11}, any connectivity graph that is not a path or a cycle \cite{linearoptics-12}\cite{linearoptics-13}. Furthermore, Brod \cite{Brod_2016} show that matchgates are classically simulable also in the case when the inputs are product states, the measurements are over arbitrarily subset of output qubits with or without adaptive measurements.

We build the Non-interacting Fermion Network (NFNet) \cite{NFNet-github}, a large-scale classical computing pipeline based on the Fermionic formalism introduced by Terhal and DiVincenzo \cite{Terhal_2002}. NFNet allows efficient simulation of continuous time evolution of quantum states, large-scale variational quantum circuits and gradient calculation and optimization of quantum circuit parameters. \textbf{NFNet offers new opportunities} for studying the feasibility of variational quantum algorithms at the hundred-qubit level, simulating entanglement dynamics of large-scale quantum systems, benchmarking quantum machine learning algorithms in high-dimensional data domain, and numerically revealing quantum properties in an otherwise intractable regime.

\textbf{With the goal of simulating as wide range of quantum state evolutions as possible}, we show that the Fermionic formulation also allows simulating continuous evolutions whose Hamiltonians are expressed as sums of two-mode (in the Fermion sense) interactions, which correspond to sums of a restricted set of Pauli strings acting on modes $i, j$ ($i<j$) with $Z$ Pauli's acting on all modes in between. We also complement the proof of Terhal and DiVincenzo \cite{Terhal_2002} with more derivation details especially concerning the decomposition of an anti-symmetric matrix, which is he backbone structure of non-interacting Fermion dynamics. After describing the design philosophy of NFNet, we further provide demonstrations of how to use the library classes and functions provided by NFNet for achieving the aforementioned simulation tasks. The same demonstrations are included in the NFNet release, written as interactive Jupyter Notebook files \cite{NFNet-github}.

The paper is organized as follows:
\begin{itemize}
    \item In Section \ref{sec:math-foundation}, we cover the mathematical foundation of the Fermionic formalism. \begin{itemize}
        \item In Section \ref{subsec:decompose}, we complement the derivations of Terhal and DiVincenzo \cite{Terhal_2002} with rigorous details on the skew-symmetric matrix decomposition that is crucial to the non-interacting Fermion structure.
        \item In Proposition \ref{prop1}, we prove that the strong simulability of NFNet goes beyond the $2$-local quantum circuit setting and extends to dense Hamiltonians written as a sum of arbitrary $(i,j)$-mode Fermionic interactions, corresponding to $k$-local Hamiltonians ($k=|j-i|+1$) each parametrized by $6$ degrees of freedom.
        \item In Proposition \ref{prop2}, we show that the boundary between Fermion-preserving and non-preserving cases is whether the coefficients are pair-wised shared among the strictly $k$-local Hamiltonians. \end{itemize}
        
    \item In Section \ref{connections}, we develop intuitive connections between NFNet and recurrent neural network, power iteration methods, and normalizing flows.
    \item In Section \ref{sec:NFNet-structure}, we explain the mathematical programming structure of the NFNet and show how to initialize an NFNet object, how to change its parameters, and how simulate the measurement probabilities on full and subsets of qubits.
    \item In Section \ref{sec:demos}, we demonstrate the use cases of NFNet with step-by-step coding examples, including: \begin{itemize}
        \item Simulating a continuous Hamiltonian evolution \ref{subsec:ham}
        \item Comparing run times and accuracies between NFNet and exact diagonalization. \ref{subsec:runtime-compare}
        \item Benchmarking NFNet's run times of measuring multi-layer continuous-controlled quantum circuits ranging from $100$ to $1000$ qubits. \ref{sec:multicircuit-512}
        \item Optimizing a quantum Born Machine through auto-differentiation. \ref{subsec:born}
        \item Training a $196$-qubit quantum system to memorize a $14\times 14$ binary MNIST \cite{deng2012mnist} hand-written digit. \ref{subsec:196-mnist}
        \item Approximating the optimal solution to a weighted-edge Maxcut problem with high success rate. \ref{subsec:maxcut}
        \end{itemize}
    \item In Section \ref{sec:conclusion}, we conclude and discuss future extensions.
\end{itemize}

\section{Mathematical Foundation} \label{sec:math-foundation}
\subsection{Preliminary: Hamiltonians constructed quadratically in Fermion operators}
Given ani-commutation relations of Fermionic creators and annihilators: 
\begin{equation}
    \left\{a_{i}, a_{j}\right\} \equiv a_{i} a_{j}+a_{j} a_{i}=0, \quad\left\{a_{i}^{\dagger}, a_{j}^{\dagger}\right\}=0, \quad\left\{a_{i}, a_{j}^{\dagger}\right\}=\delta_{i j} I,
\end{equation}
the Majorana Fermions can be viewed as a change of basis of the origininal forms $a_i$ and $a_i ^\dagger$:
\begin{equation}
    c_{2 i}=a_{i}+a_{i}^{\dagger}, \quad c_{2 i+1}=-i\left(a_{i}-a_{i}^{\dagger}\right),
\end{equation}
where the anti-commutation relations are: $\left\{c_{k}, c_{l}\right\}=2 \delta_{k l} I$, where $c_l c_l=I$ and $c_l = c_l^\dagger.$

The crucial structure that allows a Hamiltonian $H$ to be re-expressed as non-interacting Fermion dynamics is the quadratic composition \cite{Terhal_2002}, where the $k\neq l$ constraint is to avoid the creation of scalar constants as $c_k c_k = I$.:
\begin{equation} \label{H}
    H=\frac{i}{4} \sum_{k \neq l=1}^{2n} \alpha_{k l} c_{k} c_{l} = \frac{i}{4} \sum_{k < l}^{2n} (\alpha_{k l}-\alpha_{l k}) c_{k} c_{l}
\end{equation}
Because $H$ is Hermitian:
\begin{equation}
    H^\dagger =\frac{-i}{4} \sum_{k < l}^{2n} (\overline{\alpha_{k l}}-\overline{\alpha_{l k}}) c_{l} c_{k}=\frac{i}{4} \sum_{k < l}^{2n} (\overline{\alpha_{k l}}-\overline{\alpha_{l k}}) c_{k} c_{l},
\end{equation}
which requires that 
\begin{equation}
\begin{split}
    \overline{\alpha_{k l}}-\overline{\alpha_{l k}}=\alpha_{k l}-\alpha_{l k}\\
    \implies \text{img}(\alpha_{k l}) = \text{img}(\alpha_{l k}).
\end{split}
\end{equation}
Due to the anticommuting relationship, it suffices to re-express the most general case $\alpha_{kl} = a+bi$ and $\alpha_{lk} = c+bi$ as:
\begin{equation}
    \alpha_{kl} = d+bi; \; \alpha_{lk} = -d+bi.
\end{equation}
This is because $a$ and $c$ only enter the evaluation of H \ref{H} by $(\alpha_{kl}-\alpha_{lk})= a-c$, and any difference between two real numbers can be expressed by another real number $2d$.

In addition, because the imaginary parts of $\alpha_{kl}$ and $\alpha_{lk}$ cancel out in the evaluation of H \ref{H}, the set $\{\alpha_{k l}\}$ is hence generalized by any real skew-symmetric matrix $A$ where $A_{kl}=\alpha_{kl}$, $\alpha_{k l}=-\alpha_{l k}$.

The above process follows that documented in \cite{Terhal_2002}. It is exactly through a skew-symmetric matrix decomposition that the classical simulability arises, corresponding to non-interacting Fermionic dynamics. We now complement the matrix decomposition derivation in \cite{Terhal_2002} with more details.

\subsection{Skew-symmetric matrix decomposition} \label{subsec:decompose}
First note that from the spectral theorem, for a real skew-symmetric matrix the nonzero eigenvalues are all pure imaginary and are pair-wise complex conjugates: 
${\displaystyle \lambda _{1}i,-\lambda _{1}i,\lambda _{2}i,-\lambda _{2}i,\ldots }$ where each of the $\lambda_k$ is real.

The ``canonical block diagonal form'' (Terhal 2000, Valiant, etc) relies on the fact that for a real number $\lambda$:
\begin{equation}
    \frac{1}{2}\left(\begin{array}{rr}
-i & i \\
1 & 1
\end{array}\right)\left(\begin{array}{ll}
\lambda i & \\
& -\lambda i
\end{array}\right)\left(\begin{array}{ll}
i & 1 \\
-i & 1
\end{array}\right)=\left(\begin{array}{cc}
0 & \lambda \\
-\lambda & 0
\end{array}\right).
\end{equation}
If we define an orthogonal matrix $O=\frac{1}{\sqrt{2}}\left(\begin{array}{rr}
-i & i \\
1 & 1
\end{array}\right)$, the above becomes:
\begin{equation}
    O\left(\begin{array}{ll}
\lambda i & \\
& -\lambda i
\end{array}\right)O^\dagger = \left(\begin{array}{cc}
0 & \lambda \\
-\lambda & 0
\end{array}\right).
\end{equation}
It is easy to see that this works in a block-diagonal setting:
\begin{equation}
\begin{split}
     &Block(O)Block_{k=1}^b\left[\left(\begin{array}{ll}
\lambda_k i & \\
& -\lambda_k i
\end{array}\right)\right]Block(O^\dagger) = Block_{k=1}^b \left[\left(\begin{array}{cc}
0 & \lambda_k \\
-\lambda_k & 0
\end{array}\right)\right]\\
&=\left(\begin{array}{ccccc}
0 & \lambda_{1} & & & \\
-\lambda_{1} & 0 & & & \\
& & \ddots & & \\
& & & 0 & \lambda_{b} \\
& & & & \\
& & & -\lambda_{b} & 0
\end{array}\right).
\end{split}
\end{equation}

Putting it together, for a real skew-symmetric matrix $A$ (writing $Block(O)$ as $\mathbf{O}$):
\begin{equation}
\begin{split}
    &A = U\Sigma U^\dagger=U\left(\begin{array}{ccccc}
\lambda_{1}i & 0 & & & \\
0 & -\lambda_{1}i & & & \\
& & \ddots & & \\
& & & \lambda_{b}i & 0 \\
& & & & \\
& & & 0 & -\lambda_{b}i
\end{array}\right)U^\dagger\\
&=U \mathbf{O}^\dagger \mathbf{O} \left(\begin{array}{ccccc}
\lambda_{1}i & 0 & & & \\
0 & -\lambda_{1}i & & & \\
& & \ddots & & \\
& & & \lambda_{b}i & 0 \\
& & & & \\
& & & 0 & -\lambda_{b}i
\end{array}\right) \mathbf{O}^\dagger \mathbf{O} U^\dagger
= \underbrace{U \mathbf{O}^\dagger}_{W^\dagger} \left(\begin{array}{ccccc}
0 & \lambda_{1} & & & \\
-\lambda_{1} & 0 & & & \\
& & \ddots & & \\
& & & 0 & \lambda_{b} \\
& & & & \\
& & & -\lambda_{b} & 0
\end{array}\right) \underbrace{\mathbf{O} U^\dagger}_{W},
\end{split}
\end{equation}
where note that $W^\dagger W = I$. 

We further note that W is a real matrix, because for the real matrix A where its eigenvalues come in conjugate pairs, the eigenvectors are also conjugate pairs:
\begin{equation}
\begin{split}
      A\v = \lambda \v
\end{split}
\end{equation}

taking conjugate on both sides
\begin{equation}
      \overline{A\v}=\overline{A} \overline{\v}= A \overline{\v}=\overline{\lambda} \overline{\v}
\end{equation}
therefore, $\v$ and $\overline{\v}$ are eigenvectors corresponding to eigenvalues $\lambda$ and $\overline{\lambda}$.  We can thus write the orthogonal basis matrix U of A as:
\begin{equation}
    U=\begin{bmatrix}
      \u_1+i\v_1, \u_1-i\v_1, \u_2+i\v_2, \u_2-i\v_2...,\u_k+i\v_k, \u_k+i\v_k
    \end{bmatrix},
\end{equation}
where $\u$ and $\v$'s are real-valued vectors. The transformation $\mathbf{O}^\dagger$ gives:
\begin{equation}
    W^\dagger = U\mathbf{O}^\dagger = \frac{2}{\sqrt{2}}[-\v_1, \u_1, -\v_2, \u_2,...]
\end{equation}
meaning that  W is indeed a real matrix. Thus $W^\dagger=W^T$.

The important step now is the \textbf{canonical Fermionic block-diagonal transformation}, using substitution $\b = W \c$, i.e., $\c = W^\dagger \b$: 
\begin{equation}
\begin{split}
    H&=\frac{i}{4} \sum_{k \neq l=1}^{2n} \alpha_{k l} c_{k} c_{l}=\frac{i}{4}\begin{bmatrix}c_0, c_1, c_2, c_3,...,c_{2n-2}, c_{2n-1}\end{bmatrix} A \begin{bmatrix}
        c_0\\ c_1\\ \vdots \\ c_{2n-2} \\ c_{2n-1}
    \end{bmatrix}\\
    &=\frac{i}{4} \underbrace{\c^\dagger}_{(W^\dagger \b )^\dagger} A \underbrace{\c}_{W^\dagger \b}=\frac{i}{4} (W^\dagger \b )^\dagger A W^\dagger \b=\frac{i}{4} \b^\dagger W \underbrace{W^\dagger \left(\begin{array}{ccccc}
0 & \lambda_{1} & & & \\
-\lambda_{1} & 0 & & & \\
& & \ddots & & \\
& & & 0 & \lambda_{b} \\
& & & & \\
& & & -\lambda_{b} & 0
\end{array}\right) W}_{A} W^\dagger \b\\
&= \frac{i}{4}\b^\dagger \left(\begin{array}{ccccc}
0 & \lambda_{1} & & & \\
-\lambda_{1} & 0 & & & \\
& & \ddots & & \\
& & & 0 & \lambda_{b} \\
& & & & \\
& & & -\lambda_{b} & 0
\end{array}\right) \b.
\end{split}
\end{equation}

By making the center matrix into a block form, we successfully decomposed the Fermionic system into a sum of \textbf{local contributions}. There are no interactions between different sites (\textbf{non-interacting}). Note that the transformed elements inside $\b=\left(\begin{array}{l}
b_1 \\
b_2 \\
\vdots \\
b_{2n-1} \\
b_{2n}
\end{array}\right)$ have the same anti-commutation relations as $c_j$'s, because:
\begin{equation}
\begin{split}
    \left\{b_{k}, b_{l}\right\} &\equiv b_k b_l+b_l b_k = W_{(k,\cdot)}\c W_{(l,\cdot)}\c + W_{(l,\cdot)}\c W_{(k,\cdot)}\c \\
    &= W_{(k,\cdot)}\c \c^T W_{(l,\cdot)}^T + W_{(l,\cdot)}\c \c^T W_{(k,\cdot)}^T=2 \delta_{k l} I,
\end{split}
\end{equation}
and $b_k ^\dagger = b_k$.

Therefore, we can express H as

\begin{equation}
    H = \frac{i}{2}\sum_{j}^n \lambda_j b_j 'b_j '',
\end{equation}
where we re-express $\b$ as
\begin{equation}
    \b=\left(\begin{array}{l}
b_{0}^{\prime} \\
b_{0}^{\prime \prime} \\
\vdots \\
b_{n-1}^{\prime} \\
b_{n-1}^{\prime \prime}
\end{array}\right),
\end{equation}
where each two elements represent one Fermionic mode.

\textbf{Now, the key building block is} the calculation of 
\begin{equation}
\begin{split}
    U c_{i} U^{\dagger}&= \exp \left(-\frac{1}{2} \sum_{m} \epsilon_{m} b_{m}^{\prime} b_{m}^{\prime \prime}\right)\sum_{j=0}^{n-1} \left(W_{2 j, i} b_{j}^{\prime}+W_{2 j+1, i} b_{j}^{\prime \prime}\right) \exp \left(\frac{1}{2} \sum_{m} \epsilon_{m} b_{m}^{\prime} b_{m}^{\prime \prime}\right)\\
    &=\sum_{j} \exp \left(-\frac{1}{2} \sum_{m} \epsilon_{m} b_{m}^{\prime} b_{m}^{\prime \prime}\right)\left(W_{2 j, i} b_{j}^{\prime}+W_{2 j+1, i} b_{j}^{\prime \prime}\right) \exp \left(\frac{1}{2} \sum_{m} \epsilon_{m} b_{m}^{\prime} b_{m}^{\prime \prime}\right).
\end{split}
\end{equation}
because for different indices, $b_m' b_m''$ and $b_k' b_k''$ commute, we can write the matrix exponential as the product of a series of exponentials:
\begin{equation}
\begin{split}
     \sum_{j} &\exp \left(-\frac{1}{2} \sum_{m} \epsilon_{m} b_{m}^{\prime} b_{m}^{\prime \prime}\right)\left(W_{2 j, i} b_{j}^{\prime}+W_{2 j+1, i} b_{j}^{\prime \prime}\right) \exp \left(\frac{1}{2} \sum_{m} \epsilon_{m} b_{m}^{\prime} b_{m}^{\prime \prime}\right)\\
     &=  \sum_{j} \prod_m \exp \left(-\frac{1}{2} \epsilon_{m} b_{m}^{\prime} b_{m}^{\prime \prime}\right)\left(W_{2 j, i} b_{j}^{\prime}+W_{2 j+1, i} b_{j}^{\prime \prime}\right) \prod_n \exp \left(\frac{1}{2} \epsilon_{n} b_{n}^{\prime} b_{n}^{\prime \prime}\right).
\end{split}
\end{equation}
in the first group product of matrix exponentials, exponentials involving $b_{m}^{\prime} b_{m}^{\prime \prime}$, $m\neq j$ can freely move/commute pass the middle term:
\begin{equation}
\begin{split}
     &\sum_{j} \prod_m \exp \left(-\frac{1}{2} \epsilon_{m} b_{m}^{\prime} b_{m}^{\prime \prime}\right)\left(W_{2 j, i} b_{j}^{\prime}+W_{2 j+1, i} b_{j}^{\prime \prime}\right) \prod_n \exp \left(\frac{1}{2} \epsilon_{n} b_{n}^{\prime} b_{n}^{\prime \prime}\right)\\
     &=\sum_{j} \exp \left(-\frac{1}{2} \epsilon_{j} b_{j}^{\prime} b_{j}^{\prime \prime}\right)\left(W_{2 j, i} b_{j}^{\prime}+W_{2 j+1, i} b_{j}^{\prime \prime}\right)\exp \left(\frac{1}{2} \epsilon_{j} b_{j}^{\prime} b_{j}^{\prime \prime}\right)  \prod_{m\neq j} \exp \left(-\frac{1}{2} \epsilon_{m} b_{m}^{\prime} b_{m}^{\prime \prime}\right) \prod_{n\neq j} \exp \left(\frac{1}{2} \epsilon_{n} b_{n}^{\prime} b_{n}^{\prime \prime}\right)\\
     &=\sum_{j} \exp \left(-\frac{1}{2} \epsilon_{j} b_{j}^{\prime} b_{j}^{\prime \prime}\right)\left(W_{2 j, i} b_{j}^{\prime}+W_{2 j+1, i} b_{j}^{\prime \prime}\right)\exp \left(\frac{1}{2} \epsilon_{j} b_{j}^{\prime} b_{j}^{\prime \prime}\right) \exp \left(-\frac{1}{2} \sum_{m\neq j} \epsilon_{m} b_{m}^{\prime} b_{m}^{\prime \prime}\right)  \exp \left(\frac{1}{2} \sum_{n\neq j} \epsilon_{n} b_{n}^{\prime} b_{n}^{\prime \prime}\right)\\
     &=\sum_{j} \exp \left(-\frac{1}{2} \epsilon_{j} b_{j}^{\prime} b_{j}^{\prime \prime}\right)\left(W_{2 j, i} b_{j}^{\prime}+W_{2 j+1, i} b_{j}^{\prime \prime}\right)\exp \left(\frac{1}{2} \epsilon_{j} b_{j}^{\prime} b_{j}^{\prime \prime}\right).
\end{split}
\end{equation}
Note that we make the substitution 
\begin{equation}
    \begin{split}
        \begin{aligned}
\exp \left(\frac{1}{2} \epsilon_{j} b_{j}^{\prime} b_{j}^{\prime \prime}\right) &=\sum_{k=0}^{\infty} \frac{\left(\epsilon_{j} / 2\right)^{k}}{k !}\left(b_{j}^{\prime} b_{j}^{\prime \prime}\right)^{k}=\sum_{k=0}^{\infty} \frac{\left(\epsilon_{j} / 2\right)^{2 k}}{2 k !}(-1)^{k}+\sum_{k=0}^{\infty} \frac{\left(\epsilon_{j} / 2\right)^{2 k+1}}{(2 k+1) !}(-1)^{k} b_{j}^{\prime} b_{j}^{\prime \prime} \\
&=\cos \left(\epsilon_{j} / 2\right)+b_{j}^{\prime} b_{j}^{\prime \prime} \sin \left(\epsilon_{j} / 2\right)
\end{aligned}
    \end{split}
\end{equation}
to make the above expression into
\begin{equation}
    \begin{split}
       & \sum_{j} \exp \left(-\frac{1}{2} \epsilon_{j} b_{j}^{\prime} b_{j}^{\prime \prime}\right)\left(W_{2 j, i} b_{j}^{\prime}+W_{2 j+1, i} b_{j}^{\prime \prime}\right)\exp \left(\frac{1}{2} \epsilon_{j} b_{j}^{\prime} b_{j}^{\prime \prime}\right)\\
       &=\sum_{j} (\cos \left(\epsilon_{j} / 2\right)-b_{j}^{\prime} b_{j}^{\prime \prime} \sin \left(\epsilon_{j} / 2\right) ) \left(W_{2 j, i} b_{j}^{\prime}+W_{2 j+1, i} b_{j}^{\prime \prime}\right) (\cos \left(\epsilon_{j} / 2\right)+b_{j}^{\prime} b_{j}^{\prime \prime} \sin \left(\epsilon_{j} / 2\right))\\
       &=\sum_{j} \cos \epsilon_{j} W_{2 j, i} b_{j}^{\prime}+\sin \epsilon_{i} W_{2 j, i} b_{j}^{\prime \prime}+ \cos \epsilon_{j} W_{2 j+1, i} b_{j}^{\prime \prime}-\sin \epsilon_{j} W_{2 j+1, i} b_{j}^{\prime}\\
       &=(W^T)_{i,\cdot}\left(\begin{array}{cccc}
\cos \epsilon_{0}& \sin \epsilon_{0} & & \\
-\sin \epsilon_{0} & \cos \epsilon_{0} & & \\
& & \ddots & \\
& & & \\
& & \cos \epsilon_{n-1} & \sin \epsilon_{n-1} \\
& & -\sin \epsilon_{n-1} & \cos \epsilon_{n-1}
\end{array}\right) \b\\
&= (W^T)_{i,\cdot}\left(\begin{array}{cccc}
\cos \epsilon_{0}& \sin \epsilon_{0} & & \\
-\sin \epsilon_{0} & \cos \epsilon_{0} & & \\
& & \ddots & \\
& & & \\
& & \cos \epsilon_{n-1} & \sin \epsilon_{n-1} \\
& & -\sin \epsilon_{n-1} & \cos \epsilon_{n-1}
\end{array}\right) W\c.
    \end{split}
\end{equation}
if we stack the matrices $ U c_{i} U^{\dagger}$ in a column block fashion:
\begin{equation} \label{svd}
    \left(\begin{array}{c}
         U c_{0} U^{\dagger}  \\
         \vdots\\
         U c_{2n-1} U^{\dagger}
    \end{array}\right)=W^T \left(\begin{array}{cccc}
\cos \epsilon_{0}& \sin \epsilon_{0} & & \\
-\sin \epsilon_{0} & \cos \epsilon_{0} & & \\
& & \ddots & \\
& & & \\
& & \cos \epsilon_{n-1} & \sin \epsilon_{n-1} \\
& & -\sin \epsilon_{n-1} & \cos \epsilon_{n-1}
\end{array}\right) W \c=R \c.
\end{equation}
In other words, we can write:
\begin{equation} \label{conjugate}
    U c_{i} U^{\dagger}=\sum_{j=0}^{2 {n}-1} R_{i j} c_{j}=R_{i,\cdot} \c.
\end{equation}

\subsection{Multi-gate conjugate composition} 
We express in more details based on \cite{Terhal_2002}, that in a multi-gate circuit setting, the conjugate operation  \ref{conjugate} with a total circuit unitary $U$ can be decomposed into the product of a series of matrix multiplication of each gate's $R$ matrix.

For example, for a circuit with two gates, $U_1$ and $U_2$, the conjugate is:
\begin{equation}
    \begin{split}
        &U_2 U_1 c_{i} U_1^{\dagger} U_2^\dagger = U_2\left( \sum_{j=0}^{2 {n}-1} R^{(1)}_{i j} c_{j}\right) U_2^\dagger\\
        &=\sum_{j=0}^{2 {n}-1} U_2 R^{(1)}_{i j} c_{j} U_2^\dagger=\sum_{j=0}^{2 {n}-1} R^{(1)}_{i j} U_2 c_{j} U_2^\dagger\\
        &=\sum_{j=0}^{2 {n}-1} R^{(1)}_{i j} R^{(2)}_{j,\cdot} \c= R^{(1)}_{i,\cdot} R^{(2)}\c,
    \end{split}
\end{equation}
where $R^{(1)}$ and $R^{(2)}$ correspond to the circuit Hamiltonians $H_1$ and $H_2$ respectively. Therefore, for arbitrary amounts (m) of gates, we have:
\begin{equation}
    U_m ... U_1 c_{i} U_1^{\dagger}... U_m^\dagger=R^{(1)}_{i,\cdot} R^{(2)} ... R^{(m)}\c
\end{equation}

\subsection{Measurement computation} \label{sec:measurement_computation}
We briefly summarize the derivation of \cite{Terhal_2002}. For a general input bit string $\x$, its corresponding quantum state is expressed via Fermionic creators operated on the $|\mathbf{0}\rangle$ state:
\begin{equation}
|x\rangle=a_{p_1}^{\dagger} \ldots a_{p_l}^{\dagger}|\mathbf{0}\rangle.
\end{equation}

Written in terms of Majorana operators, this is equivalently:
\begin{equation}
c_{2 p_1} \ldots c_{2 p_l}|{\mathbf{0}}\rangle,
\end{equation}
where $1 \leq p_1 < p_2<\cdots \leq N$ are positions where the qubit is $1$. 

For any given quantum state $|\psi\rangle = \sum_i^{2^N} \alpha_i |x_i\rangle$, the matrix denoted by $a^{\dagger}_i a_i$ will select all amplitudes $\alpha_i$ whose corresponding bases $|x_i\rangle$ has the $i$-th bit equals to $1$; similarly, $a_i a^{\dagger}_i$ will select amplitudes whose bases has the $i$-th bit equals to $0$. Therefore, given an initial single basis state $|x\rangle$ that is evolved under a unitary $U$, the probability of observing a bit string $\y$ (a sub-string of length $k\leq N$) is (for example, measuring the $j_1$-th bit to be $1$, $j_2$-th bit to be $0$, and $j_k$-th bit to be $1$ ):
\begin{equation}
    p(y|x) = \langle x|U^\dagger (a^\dagger_{j_1} a_{j_1})(a_{j_2} a^\dagger_{j_2})...(a^\dagger_{j_k} a_{j_k})  U|x\rangle.
\end{equation}

Via conjugation, we have:
\begin{equation}
p\left(y | x\right)=\langle\mathbf{0}\left|c_{2 p_l} \ldots c_{2 p_1}\right| U^{\dagger} a_{j_1}^{\dagger} U U^{\dagger} a_{j_1} U \ldots U^{\dagger} a_{j_k}^{\dagger} U U^{\dagger} a_{j_k} U c_{2 p_1} \ldots c_{2 p_l} \mid \mathbf{0}\rangle,
\end{equation}
where
\begin{equation}
\begin{split}
    U^{\dagger} a_i U=\frac{1}{2} U^{\dagger}\left(c_{2 i}+i c_{2 i+1}\right) U&=\frac{1}{2} \sum_j\left({R_{2 i, i}^T}+{i R_{2 i+1, j}^T}\right) c_j=\sum_j T_{i j} c_j,\\
    U^{\dagger} a_i^{\dagger} U&=\sum_j T_{i j}^* c_j.
\end{split}
\end{equation}
Thus
\begin{equation}
\begin{aligned}
p\left(y | x\right)= & \sum_{m_1, n_1, \ldots, m_k, n_k} T_{j_1, m_1} T_{j_1, n_1}^* \ldots T_{j_k, n_k}^* T_{j_k, m_k} \\
& \left\langle\mathbf{0}\left|c_{2 p_l} \ldots c_{2 p_1} c_{m_1} c_{n_y} \ldots c_{n_k} c_{m_k} c_{2 p_1} \ldots c_{2 p_l}\right| \mathbf{0}\right\rangle.
\end{aligned}
\end{equation}
Then, through Wick's theorem [2], the computation of $p\left(y | x\right)$ equates to the evaluation of the Pfaffian of matrix $M$, where $M$ is a function of matrix elements of $T$ (refer to the appendices of Terhal and DiVincenzo \cite{Terhal_2002} for more details on the remapping tables).

\subsection{Hamiltonian composition of multi-pair interactions} \label{subsec:proposition}
Valiant \cite{valiant} and Terhal et al \cite{Terhal_2002} mainly considered the simulation of quantum circuit in the sense that each gate is a unitary $e^{-iH}$ where $H$ is a Hamiltonian concerning interactions only involving two Fermion modes $i,j$. We herein show that as long as the skew-symmetric matrix structure is kept, the efficient non-interacting Fermion simulation also holds for the case when $H$ is a sum of pair-wise Fermion interactions.
\begin{theorem} \label{prop1}
For any Hamiltonian written as a sum of quadratic Majorana terms
\begin{equation} \label{H}
H=\frac{i}{4} \sum_{k \neq l=1}^{2n} \alpha_{k l} c_{k} c_{l} = \frac{i}{4} \sum_{k < l}^{2n} (\alpha_{k l}-\alpha_{l k}) c_{k} c_{l},
\end{equation}
where $\{\alpha_{kl}\}$ form a matrix $A$ such that $A_{kl}=\alpha_{kl}$ which is a real skew-symmetric matrix, $H$ can always be written as the sum of pair-wise Fermion interactions, 
\begin{equation}
H = \frac{i}{4} \sum_{i < j} H^{(i,j)},
\end{equation}
where each $H^{(i,j)}$ is controlled by at most $6$ degrees of freedom and is $k$-local where $k=j-i+1$.
\end{theorem}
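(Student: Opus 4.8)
The plan is to prove the statement by reorganizing the double sum defining $H$ according to the Fermion mode that each Majorana index belongs to, and then reading off both the degree-of-freedom count and the locality from this regrouping. Recall that Majorana index $k$ belongs to Fermion mode $\lfloor k/2\rfloor$, so mode $i$ owns the pair $(c_{2i},c_{2i+1})$. Using the skew-symmetry of $A$ to pass to the strictly upper-triangular form $H=\frac{i}{4}\sum_{k<l}2\alpha_{kl}c_kc_l$, I would split the sum into \emph{cross} terms, where $\lfloor k/2\rfloor<\lfloor l/2\rfloor$, and \emph{self} terms, where $\lfloor k/2\rfloor=\lfloor l/2\rfloor$, i.e.\ $k=2i$, $l=2i+1$.

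Next, for each ordered pair of modes $i<j$ I would define $H^{(i,j)}$ to collect the four cross terms $c_{2i}c_{2j}$, $c_{2i}c_{2j+1}$, $c_{2i+1}c_{2j}$, $c_{2i+1}c_{2j+1}$, together with the (at most two) self terms $c_{2i}c_{2i+1}$ and $c_{2j}c_{2j+1}$ assigned by a fixed convention so that every self term is counted exactly once across the whole family $\{H^{(i,j)}\}$. Equivalently, $H^{(i,j)}$ is the quadratic form built from the $4\times 4$ principal submatrix of $A$ on the index set $\{2i,2i+1,2j,2j+1\}$; since this submatrix is real and skew-symmetric it has $\binom{4}{2}=6$ independent entries (two within-mode, four cross), which yields the bound of at most $6$ degrees of freedom.

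The substantive step is the locality claim. Under the Jordan--Wigner correspondence implicit in the Majorana definitions $c_{2i}=a_i+a_i^\dagger$ and $c_{2i+1}=-i(a_i-a_i^\dagger)$, one has $c_{2i}=Z_0\cdots Z_{i-1}X_i$ and $c_{2i+1}=Z_0\cdots Z_{i-1}Y_i$. I would then multiply two such strings for modes $i<j$: the shared $Z$-prefix on qubits $0,\dots,i-1$ squares to the identity, the factor on qubit $i$ collapses to a single Pauli, the longer string contributes $Z_{i+1}\cdots Z_{j-1}$ in between, and qubit $j$ carries a single Pauli. Hence each cross term is a Pauli string supported exactly on qubits $i,i+1,\dots,j$, so it is $k$-local with $k=j-i+1$; the self terms collapse to single-qubit operators and therefore fit inside this window. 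Summing the four cross terms and the assigned self terms shows $H^{(i,j)}$ is $k$-local, and reassembling the family recovers $H=\frac{i}{4}\sum_{i<j}H^{(i,j)}$.

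I expect the main obstacle to be purely bookkeeping: the one-body self terms $c_{2i}c_{2i+1}$ are shared by every pair containing mode $i$, so they must be distributed across the pairwise family without double counting while keeping the ``at most $6$'' count honest. The Jordan--Wigner locality computation itself is routine, but it must be carried out carefully to confirm that the intermediate $Z$-string fills precisely qubits $i+1,\dots,j-1$ and that nothing acts outside the interval $[i,j]$, which is exactly what pins down $k=j-i+1$.
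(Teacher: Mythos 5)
Your proposal is correct and follows essentially the same route as the paper: you regroup the quadratic form by Fermion modes (your $4\times 4$ principal submatrix of $A$ on $\{2i,2i+1,2j,2j+1\}$ is exactly the paper's block decomposition into $B^{e}_i$, $B_{i,j}$, $B_{j,i}$, $B^{e}_j$), distribute the within-mode terms $c_{2i}c_{2i+1}$ across pairs without double counting just as the paper constrains $\sum_e B^e_k = B_k$, arrive at the same $4+1+1=6$ degree-of-freedom count, and establish $k$-locality via the same Jordan--Wigner computation that the paper carries out in its ``Fermion to Pauli Mapping'' subsection.
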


\begin{proof}
We prove this via re-expressing the $H=\frac{i}{4} \sum_{k \neq l=1}^{2n} \alpha_{k l} c_{k} c_{l}=\frac{i}{4}\c^\dagger A \c=\frac{i}{4}\c^\top A \c$ in a block form:
\begin{equation}
\begin{split}
    H &=\frac{i}{4}\c^\top A \c=\frac{i}{4} (\underbrace{c_0, c_1}_{\xi^\top_0}, \underbrace{c_2, c_3}_{\xi^\top_1},\cdots, \underbrace{c_{2(n-1)}, c_{2(n-1)+1}}_{\xi^\top_{n-1}}) A \begin{pmatrix}
        c_0\\ c_1 \\ \vdots \\ c_{2(n-1)} \\ c_{2(n-1)+1}
    \end{pmatrix}\\
    &=\frac{i}{4}(\xi^\top_0,\xi^\top_1,\cdots, \xi^\top_{n-1})\begin{pmatrix} B_0 &B_{0, 1} &B_{0, 2} &\dots &B_{0, n-1}\\
    B_{1, 0} &B_1 &B_{1, 2} &\dots &B_{1, n-1}\\
    B_{2,0} &B_{2, 1} &B_2 &\dots &B_{2, n-1}\\
    \vdots  &\vdots   &\vdots   &\ddots & \vdots\\
    B_{n-1, 0} &B_{n-1, 1} &B_{n-1, 2} &\dots &B_{n-1}
    \end{pmatrix}\begin{pmatrix}
        \xi_0 \\ \xi_1 \\ \vdots \\ \xi_{n-1}
    \end{pmatrix},
\end{split}
\end{equation}
where because $A$ is skew-symmetric, for $i<j$:
\begin{equation}
    B_i = B_{i,i} = \begin{pmatrix}
        0 &f^{(i)}\\
        -f^{(i)} &0
    \end{pmatrix}, \; B_{i,j}=\begin{pmatrix}
        a^{(i,j)}, &b^{(i,j)}\\
        c^{(i,j)}, &d^{(i,j)}
    \end{pmatrix} \; B_{j,i}=\begin{pmatrix}
        -a^{(i,j)}, &-c^{(i,j)}\\
        -b^{(i,j)}, &-d^{(i,j)}
    \end{pmatrix}. 
\end{equation}
Therefore, 
\begin{equation}
\begin{split}
H = \frac{i}{4} \sum_{i,j} \xi^\top_i B_{i,j} \xi_j=\frac{i}{4} \sum_{(i,j)=e\in E, i < j} (\xi^\top_i, \xi^\top_j) &\begin{pmatrix}B^{e}_i &B_{i,j}\\
B_{j,i} & B^{e}_j
\end{pmatrix} \begin{pmatrix}
    \xi_i \\ \xi_j
\end{pmatrix} =\frac{i}{4} \sum_{i < j}H^{(i, j)},\\
\sum_{(k, j)=e\in E, k<j} B^e_k + \sum_{(j, k)=e\in E, j<k} B^e_k   &= B_k= B_{k,k} = \begin{pmatrix}
        0 &f^{(k)}\\
        -f^{(k)} &0
    \end{pmatrix}
\end{split}
\end{equation}
which means that $H$ is a sum of Hamiltonians involving any pair of Fermion modes $i<j$. \end{proof}

\subsection{Fermion to Pauli Mapping}
For each term $\frac{i}{4}H^{(i,j)}$, the exact mapping from the Fermionic notation and Pauli strings directly follows:
\begin{equation}
\begin{split}
    \frac{i}{4}H^{(i,j)} &= \frac{i}{4} (\xi^\top_i, \xi^\top_j) \begin{pmatrix}B^{e}_i &B_{i,j}\\
B_{j,i} & B^{e}_j
\end{pmatrix} \begin{pmatrix}
    \xi_i \\ \xi_j
\end{pmatrix}=\frac{i}{4}(c_{2i}, c_{2i+1}, c_{2j}, c_{2j+1})\begin{pmatrix} 
&0 &e &a &b\\
&-e &0 &c &d\\
&-a &-c &0 &f\\
&-b &-d &-f &0
\end{pmatrix}\begin{pmatrix}c_{2i}\\ c_{2i+1}\\ c_{2j}\\ c_{2j+1}\end{pmatrix}\\
&=\frac{i}{4}(a_{i}+a_{i}^\dagger, -i(a_{i}-a_{i}^\dagger), a_{j}+a_{j}^\dagger, -i(a_{j}-a_{j}^\dagger))\begin{pmatrix} 
&0 &e &a &b\\
&-e &0 &c &d\\
&-a &-c &0 &f\\
&-b &-d &-f &0
\end{pmatrix}\begin{pmatrix}a_{i}+a_{i}^\dagger\\ -i(a_{i}-a_{i}^\dagger)\\ a_{j}+a_{j}^\dagger\\ -i(a_{j}-a_{j}^\dagger)\end{pmatrix}\\
&=\frac{-e}{2} Z_i+\frac{-f}{2} Z_j + \frac{-c}{2} X_i \left(\prod_{k=i+1}^{j-1} Z_k\right) X_j + \frac{b}{2} Y_i \left(\prod_{k=i+1}^{j-1} Z_k\right) Y_j +\\
&\frac{-d}{2} X_i \left(\prod_{k=i+1}^{j-1} Z_k\right) Y_j + \frac{a}{2}Y_i \left(\prod_{k=i+1}^{j-1} Z_k\right) X_j,
\end{split}
\end{equation}
where we directly apply 
\begin{align} \label{JW}
&a_j^{\dagger}:=\left(\prod_{k=1}^{j-1} Z_k\right)\left(\frac{X_j-i Y_j}{2}\right) \\
&a_j:=\left(\prod_{k=1}^{j-1} Z_k\right)\left(\frac{X_j+i Y_j}{2}\right).
\end{align}

\subsection{Special Case: Fermion-preserving Evolution}
We here showcase in more detail the Fermion-preserving evolution as documented in \cite{Terhal_2002}. The cases with Fermion-preserving Hamiltonian evolution are the easiest to compute. A Fermion-preserving Hamiltonian expressed in terms of Fermion creators and annihilators is:
\begin{equation}
H_g=b_{i i} a_i^{\dagger} a_i+b_{j j} a_j^{\dagger} a_j+b_{i j} a_i^{\dagger} a_j+b_{i j}^* a_j^{\dagger} a_i.
\end{equation}
Because $H_g$ is Hermitian, $\{b_{ij}\}_{(i,j)}=\b$ forms a $2$ by $2$ Hermitian matrix.

\textbf{We hereby emphasize that $i,j$ do not have to be nearest neighbors.} The following derivation maps the Fermion-preserving $H_g$ to tensor products of Pauli operators for any arbitrary pair of $i,j$ where $i\neq j$.
\begin{theorem} \label{prop2}
    The Fermion-preserving Hamiltonian has restricted degrees of freedom ($4$ as opposed to $6$) and incurs two-Fermion-mode interactions between modes $(i, j)$, corresponding to the sum of $k$-local Hamiltonians where $k=j-i+1$. The boundary between Fermion-preserving and non-preserving cases is whether the coefficients are pair-wised shared among the strictly $k$-local Hamiltonians.
\end{theorem}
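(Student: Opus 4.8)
The plan is to expand the Fermion-preserving Hamiltonian $H_g$ in the Majorana basis and match it term-by-term against the general six-parameter block $B_{i,j}$ established in Proposition~\ref{prop1}, thereby exhibiting the two linear constraints that collapse the six degrees of freedom down to four. Concretely, I would substitute $a_i = \tfrac{1}{2}(c_{2i}+ic_{2i+1})$ and $a_i^\dagger = \tfrac{1}{2}(c_{2i}-ic_{2i+1})$ into each of the four terms of $H_g$ and use $\{c_k,c_l\}=2\delta_{kl}I$ to reduce every operator product to a single quadratic monomial $c_k c_l$ (plus scalar multiples of $I$, which only shift a global phase and are discarded). Writing $H=\tfrac{i}{4}\c^\top A\c$ in the skew-symmetric convention, each unordered pair contributes $\tfrac{i}{2}A_{kl}c_kc_l$, so matching a computed coefficient against this normalization reads off the entry $A_{kl}$ directly.

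The key computations are three. First, the number operators give $a_i^\dagger a_i = \tfrac{1}{2}(I + i c_{2i}c_{2i+1})$ and likewise for $j$, forcing the diagonal block entries $e=b_{ii}$ and $f=b_{jj}$ to be real. Second, writing $b_{ij}=p+iq$ with $p,q\in\mathbb{R}$, careful anticommutator bookkeeping yields $a_i^\dagger a_j + a_j^\dagger a_i = \tfrac{i}{2}(c_{2i}c_{2j+1}-c_{2i+1}c_{2j})$ and $i(a_i^\dagger a_j - a_j^\dagger a_i)=\tfrac{i}{2}(c_{2i}c_{2j}+c_{2i+1}c_{2j+1})$. Reading these against the $4\times4$ block of the Fermion-to-Pauli subsection identifies the off-diagonal entries as $b=-c=p$ and $a=d=q$. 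These are exactly the two constraints $b+c=0$ and $a-d=0$; together with the reality of $e,f$ they leave precisely the four real parameters $b_{ii},b_{jj},p,q$, establishing the ``$4$ as opposed to $6$'' count and confirming that the interaction couples only modes $i$ and $j$, hence is $k$-local with $k=j-i+1$.

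To obtain the boundary statement, I would feed the constrained parameters into the Pauli expansion of $\tfrac{i}{4}H^{(i,j)}$ already derived in the Fermion-to-Pauli subsection. Setting $b=-c$ makes the $X_i(\prod_{i<k<j}Z_k)X_j$ and $Y_i(\prod_{i<k<j}Z_k)Y_j$ coefficients coincide at $\tfrac{p}{2}$, merging them into a single hopping term; setting $a=d$ locks the $X_i(\prod_{i<k<j}Z_k)Y_j$ and $Y_i(\prod_{i<k<j}Z_k)X_j$ coefficients to the shared magnitude $\tfrac{q}{2}$. Thus Fermion preservation is equivalent to the $XX/YY$ pair and the $XY/YX$ pair being pairwise locked within each strictly $k$-local Hamiltonian, whereas the number-violating pairing terms $a_i^\dagger a_j^\dagger$ and $a_i a_j$ are precisely what unlock the two extra independent coefficients in the non-preserving case.

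I expect the main obstacle to be the sign- and ordering-careful Majorana bookkeeping in the second step: since the $c_k$ anticommute, the cross terms in $a_i^\dagger a_j \pm a_j^\dagger a_i$ must be collected with the correct signs for the clean cancellations to occur (for instance, the vanishing of the $c_{2i}c_{2j+1}$-type pieces in the difference), and a single misplaced sign would spoil the identifications $b=-c$ and $a=d$ on which the entire degree-of-freedom count and the boundary characterization rest. Everything downstream is routine substitution once those two Majorana identities are in hand.
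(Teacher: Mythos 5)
Your proposal is correct, and I checked your two key Majorana identities ($a_i^\dagger a_j + a_j^\dagger a_i = \tfrac{i}{2}(c_{2i}c_{2j+1}-c_{2i+1}c_{2j})$ and $i(a_i^\dagger a_j - a_j^\dagger a_i) = \tfrac{i}{2}(c_{2i}c_{2j}+c_{2i+1}c_{2j+1})$) against the paper's conventions $c_{2i}=a_i+a_i^\dagger$, $c_{2i+1}=-i(a_i-a_i^\dagger)$: both hold, and the resulting identifications $e=b_{ii}$, $f=b_{jj}$, $b=-c=p$, $a=d=q$ reproduce exactly the four-parameter form the paper derives. However, your route is genuinely different from the paper's. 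The paper never touches the skew-symmetric block in this proof: it applies the Jordan--Wigner substitution \eqref{JW} term by term to $H_g$, does the $Z$-string bookkeeping entirely in the Pauli picture, and lands directly on the explicit four-parameter expansion \eqref{H_preserve}; the ``pairwise shared coefficients'' boundary is then read off by comparing that formula against the six-parameter Pauli expansion of the Fermion-to-Pauli subsection. You instead expand $H_g$ into Majorana quadratics, match against the general $4\times 4$ skew-symmetric block, and characterize Fermion preservation as the codimension-two linear subspace $b+c=0$, $a-d=0$ of the six-parameter family, invoking the already-derived general Fermion-to-Pauli mapping only at the very end. Your approach buys a sharper structural statement --- it isolates exactly which two of the six degrees of freedom are frozen, and identifies the number-violating pairing terms $a_i^\dagger a_j^\dagger$, $a_j a_i$ as precisely what spans the complementary directions --- and it replaces $Z$-string manipulation with bare anticommutator algebra, reusing the machinery of Proposition~\ref{prop1}. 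The paper's approach buys a self-contained derivation of the explicit Pauli formula \eqref{H_preserve}, which is what the demo sections later use verbatim. The one place to stay careful in your write-up is the normalization you flagged yourself: $H_g$ matches $\tfrac{i}{4}H^{(i,j)}$, each unordered pair contributing $\tfrac{i}{2}A_{kl}c_k c_l$, not $H^{(i,j)}$ itself; you handled this correctly, but a slip there would uniformly rescale every identification.
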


\begin{proof}
We start from (using Eq. \ref{JW}):
\begin{equation}
\begin{aligned}
&b_{i i} a_i^{+} a_i=\left(\prod_{k=1}^{i-1} Z_k\right)\left(\frac{X_i-i Y_i}{2}\right)\left(\prod_{k=1}^{i-1} Z_k\right)\left(\frac{X_i+i Y_i}{2}\right)=b_{i i} \left(\frac{I}{2}-\frac{Z_i}{2}\right)\\
&b_{j j} a_j^{+} a_j=\left(\prod_{k=1}^{j-1} Z_k\right)\left(\frac{X_j-i Y_j}{2}\right)\left(\prod_{k=1}^{j-1} Z_k\right)\left(\frac{X_j+i Y_j}{2}\right)=b_{j j} \left(\frac{I}{2}-\frac{Z_j}{2}\right).
\end{aligned}
\end{equation}

In addition, we expand to get (assuming $i < j$):
\begin{equation}
\begin{aligned}
&a_i^\dagger a_j=\left(\prod_{k=1}^{i-1} Z_k\right)\left(\frac{X_i-i Y_i}{2}\right) \left(\prod_{k=1}^{j-1} Z_k\right)\left(\frac{X_j+i Y_j}{2}\right)\\
&=\frac{(X_i-i Y_i)Z_i}{2}\left(\prod_{k=i+1}^{j-1} Z_k \right) \frac{(X_j+i Y_j)}{2}\\
&=\frac{1}{4}\left(X_i Z_i \left(\prod_{k=i+1}^{j-1} Z_k \right) X_j + X_i Z_i \left(\prod_{k=i+1}^{j-1} Z_k\right) iY_j - iY_i Z_i \left(\prod_{k=i+1}^{j-1} Z_k\right) X_j-iY_i Z_i \left(\prod_{k=i+1}^{j-1} Z_k\right) iY_j  \right),\\
&a_j^{\dagger} a_i = \left(\prod_{k=1}^{j-1} Z_k\right)\left(\frac{X_j-i Y_j}{2}\right) \left(\prod_{k=1}^{i-1} Z_k\right)\left(\frac{X_i+i Y_i}{2}\right)\\
&=\frac{1}{4}\left(Z_i X_i \left(\prod_{k=i+1}^{j-1} Z_k \right) X_j - Z_i X_i \left(\prod_{k=i+1}^{j-1} Z_k\right) iY_j + iZ_i Y_i \left(\prod_{k=i+1}^{j-1} Z_k\right) X_j-iZ_i Y_i \left(\prod_{k=i+1}^{j-1} Z_k\right) iY_j  \right).
\end{aligned}
\end{equation}

Then, by expressing $b_{ij}=c+di$ and $b^{*}_{ij}=c-di$, we get:
\begin{equation}
    \begin{aligned}
        b_{ij} a_i^{\dagger} a_j + b_{ij}^{*} a_j^{\dagger} a_i &=\frac{1}{2}c\left(Y_i \left(\prod_{k=i+1}^{j-1} Z_k\right)Y_j\right)+\frac{1}{2}c\left(X_i \left(\prod_{k=i+1}^{j-1} Z_k\right)X_j\right)\\
        &+\frac{1}{2}d\left(Y_i \left(\prod_{k=i+1}^{j-1} Z_k\right) X_j\right)-\frac{1}{2}d\left(X_i \left(\prod_{k=i+1}^{j-1} Z_k\right) Y_j \right).
    \end{aligned}
\end{equation}
Therefore, the total $H_g$ applied on any pair of $i < j$ is (where we write $b_{ii}=a$ and $b_{jj}=b$):
\begin{equation}\label{H_preserve}
    \begin{aligned}
        H_g^{i,j} &= -\frac{1}{2} \left(aZ_i + bZ_j\right)+\frac{c}{2} \left(Y_i \left(\prod_{k=i+1}^{j-1} Z_k\right) Y_j + X_i \left(\prod_{k=i+1}^{j-1} Z_k\right) X_j \right)\\
        &+\frac{d}{2}\left(Y_i \left(\prod_{k=i+1}^{j-1} Z_k\right) X_j - X_i \left(\prod_{k=i+1}^{j-1} Z_k\right) Y_j \right),
    \end{aligned}
\end{equation}
where $a, b, c, d$ are real continuous coefficients.
\end{proof}

\subsection{Limitation: Non-nearest-neighbor Fermion Modes Dynamics}
We emphasize here that the efficient simulation of non-interacting Fermions comes with a cost, which limits universal computation. The problem arises with the non-interacting Fermion formulation is the additional single $\Z$ rotations introduced, corresponding to the phase adjustment of creating or destroying a Fermion. To see this effect, consider a single Hamiltonian evolution with $H_g^{i,j}$ applied to the first and last qubits of single basis state $|s\rangle$ (where $s_i$ is the $i$-th bit of string s). For example:
\begin{equation}
\begin{split}
    e^{-i H_g^{1,N}}& |s_1\rangle |s_{2,3,...,N-1}\rangle |s_N \rangle = \exp \left(-i X_i \left(\prod_{k=i+1}^{j-1} Z_k\right) X_j \right) |s_1\rangle |s_{2,3,...,N-1}\rangle |s_N \rangle\\
    &=e^{ \left(-i \lambda \X_{1} \X_N \right) } |s_1\rangle |s_{2,3,...,N-1}\rangle |s_N \rangle,
\end{split}
\end{equation}
where $\lambda = (-1)^{\sum_{j=2}^{N-1} s_j}$ is the eigenvalue of the matrix $\prod_{k=i+1}^{j-1} Z_k$ with eigenvector $|s_{2,3,...,N-1}\rangle$. In other words, the $\prod_{k=i+1}^{j-1} Z_k$ \textbf{picks up an additional phase for the Hamiltonian} used to evolve the state, which cannot be directly offset via a global phase adjustment on the total state at the end.

This phase change introduces an interesting dynamic, where if there are odd number of spins in the $|s_{1,...,N-1}\rangle$ string, \textbf{the evolution applies a inverse evolution} on the $i$-th and $N$-th qubit (node) as opposed to the even number case. Thus decreases the expressiveness as the evolution is conditioned on other bits.

\section{Connections to Recurrent Neural Network Unrolling, Power Iteration, and Normalizing Flows} \label{connections}
The mathematical operations inherent to the NFNet computations have key connections to the Recurrent Neural Network (RNN). The ``recurrent'' element resides in the fact that, to compute the singular values of the real skew-symmetric matrix $A$ in the calculation of conjugate operators $U c_i U^\dagger$ (Eq. \ref{svd}) (replicated below), one has to numerically solve the compact singular value decomposition (SVD) via power iteration or its variants:
\begin{equation} \label{svd}
    \left(\begin{array}{c}
         U c_{0} U^{\dagger}  \\
         \vdots\\
         U c_{2n-1} U^{\dagger}
    \end{array}\right)=W^T \left(\begin{array}{cccc}
\cos \epsilon_{0}& \sin \epsilon_{0} & & \\
-\sin \epsilon_{0} & \cos \epsilon_{0} & & \\
& & \ddots & \\
& & & \\
& & \cos \epsilon_{n-1} & \sin \epsilon_{n-1} \\
& & -\sin \epsilon_{n-1} & \cos \epsilon_{n-1}
\end{array}\right) W \c=R \c,
\end{equation}
where $W=\boldsymbol{O}U^\dagger$. The singular values $\{\epsilon_i\}$ and singular vector matrix $U$ of $A$ are computed via power iteration $(\{\epsilon_i\}, U)=\mathcal{P}(A)$. For example, to compute the SVD of a matrix $A \in \mathbb{R}^{2n \times 2n}$. We start by computing the first singular value $\sigma_1$ and left and right singular vectors $u_1$ and $v_1$ of $A$:
\begin{enumerate}
    \item Generate $x_0$ such that $x_0(i) \sim \mathcal{N}(0,1)$.
    \item $s$: number of iteractions
    \item for $i$ in $[1, \ldots, s]$: $\quad x_i \leftarrow A^T A x_{i-1}$ end for
    \item $v_1 \leftarrow x_i /\left\|x_i\right\|$
    \item $\sigma_1 \leftarrow\left\|A v_1\right\|$
    \item $u_1 \leftarrow A v_1 / \sigma_1$
    \item return $\left(\sigma_1, u_1, v_1\right)$.
\end{enumerate}
The $s$, the number of iterations is lower-bounded by a given error $\varepsilon$ via $s\geq \log (4 \log (2 n / \delta) / \varepsilon \delta) / 2 \lambda$ in order to get $\varepsilon$ precision with probability at least $1-\delta$, and $\min _{i<j} \; \log \left(\sigma_i / \sigma_j\right) \geq \lambda$. Step 3. corresponds to a recurrent connection. For the rest of the singular vectors and singular values, simply update $A$ via $A\leftarrow A-\sigma_1 u_1 v_1^T=\sum_{i=2}^n \sigma_i u_i v_i^T$, \textbf{which corresponds to a residual connection after the recurrent layer.}

Furthermore, the inherent probabilistic nature of quantum states naturally implies that the densities values of both the initial observation probability $p_{0}(\y|\x)$ and the final observation probability $p_{\theta}(\y|\x)$ are normalized (fixing $\x$) on a discrete support. Therefore, \textbf{the important implication is that one can use a quantum system/circuit to represent a latent distribution} (which is conjectured to be hard to simulate via classical neural networks \cite{boixo_supremacy}), and \textbf{NFNet is an efficient classical framework that naturally enables transformations of discrete distributions} by simulating the quantum behavior. However, the missing piece is how to find or learn the mapping $f_{\phi}(\y): \y\rightarrow \z$ of discrete data to discrete latent variables such that the log likelihood computed using the circuit's output density $p_{\theta}(\z|\x)$, i.e., $\mathcal{LL}=\sum_{\z} \log(p_{\theta}(\z|\x))=\sum_{\y} \log(p_{\theta}(f_{\phi}(\y)|\x))$ is maximized, where the calculation of $\mathcal{LL}$ is done by sampling from the circuit in a real quantum setting.

\section{NFNet Structure} \label{sec:NFNet-structure}
\subsection{Formulation}
Mathematically, NFNet is a layer-wise continuously parametrized network. Each layer models a unitary corresponding to a Hamiltonian $H_l$ which is generalized by:
\begin{equation}
    H_l = \frac{i}{4} \sum_{i < j}H^{(i, j)}(\boldsymbol{\alpha}_l, \boldsymbol{\beta}_l) = \frac{i}{4} \sum_{i < j} H_{1}^{(i,j)}(\alpha_1, \beta_1) + H_{2}^{(i,j)}(\alpha_2, \beta_2) + H_{3}^{(i,j)}(\alpha_3, \beta_3).
\end{equation}
Therefore, each layer simulates unitary evolution $e^{-iH_l}$ parameterized by $6\frac{N(N-1)}{2}$ real rotation parameters.

A network consists of $L$ total layers simulates the total unitary evolution written as:
\begin{equation}
    U_{\theta} = e^{-iH_L}\dots e^{-iH_2}e^{-iH_1},
\end{equation}
with a total of $6L$ parameters.

The NFNet computes $p_{\theta}(y|x)$, where $\x$ is a $N$-bit string and $\y$ is a $k$-bit string, $k\leq N$, and $\theta$ is the total set of network parameters. Quantum-mechanically, $p_{\theta}(y|x)$ is the probability of measurement outcome $\y$ given a basis state $|x\rangle$. Therefore, the network directly outputs the discrete density $p_{\theta}(y|x)$ satisfying:
\begin{equation}
    \sum_{\boldsymbol{y}\in \{0,1\}^k} p_{\theta}(y|x) = 1.
\end{equation}

We here emphasize that NFNet does not directly compute $U_\theta$ by the exponential-time exact diagonalization, but rather computes $p_{\theta}(y|x)=Pf(M(\boldsymbol{\alpha}, \boldsymbol{\beta}))$, where $M(\boldsymbol{\alpha}, \boldsymbol{\beta})\in \mathbb{R}^{2N\times 2N}$ is a matrix parametrized by all rotation parameters in the network:
\begin{equation}
    M(\boldsymbol{\alpha}, \boldsymbol{\beta})=f(T(\boldsymbol{\alpha}, \boldsymbol{\beta}))=f(T(R(\boldsymbol{\alpha}_1, \boldsymbol{\beta}_1)R(\boldsymbol{\alpha}_2, \boldsymbol{\beta}_2)\dots R(\boldsymbol{\alpha}_L, \boldsymbol{\beta}_L))),
\end{equation}
where $f(\cdot)$ is a re-mapping procedure that re-orders elements in the $T$ matrix according the tables in \cite{Terhal_2002}. $T$ and $R$ matrices are as defined in Section \ref{sec:measurement_computation}.

Because the width of $M(\boldsymbol{\alpha},\boldsymbol{\beta})$ scales linearly with $N$, the evaluation of  and the Pfaffian calculation is polynomial in the size of $M(\boldsymbol{\alpha},\boldsymbol{\beta})$, whose computation (matrix multiplication and polynomial-time re-sorting) is also polynomial in $N$, the total computation of $p_{\theta}(y|x)$ is thus also polynomial in $N$.

\subsection{Initialize an NFNet}
An NFNet ``Network'' class inherits a PyTorch nn.Module, which contains continuously differentiable parameters. To initialize an NFNet ``network'' class, only the pre-defined number of qubits $N$ and the Fermionic connectivities are required. In the following example we consider a circuit structure where each gate $l$ has an underlying two-Fermion-mode Hamiltonian $H^{(i,j)}_l$. $N>0$ is a real integer, and ``conn\textunderscore list''is a nested list, inside which each pair (connectivity of each gate) is a specific connection between two Fermion modes $i,j$. The length of ``conn\textunderscore list'' is thus the number of total layers $L$.
\begin{lstlisting}[language=Python, caption=Initialize NFNet.Network]
from NFNet import Network, get_nn_pairs, binary_basis, unpacknbits, initialize_sparse

N = 10 # 10 qubits
conn_list = [[1, 6], [2, 3], [1, 4]] # We consider the Hamiltonian interaction between qubit 1 and 6, 2 and 3, and 1 and 4 (zero indexed).
L = len(conn_list)
circuit = Network(conn_list, N) # Create an evolution simulation, calling the NFNet.Network class.

\end{lstlisting}

\subsection{Change NFNet parameters}
Upon initialization, the NFNet ``Network'' class automatically select random rotation parameters. The user can also manually set these rotation parameters via the ``Network.manual\textunderscore set\textunderscore params()'' function. 
\begin{lstlisting}[language=Python, caption=Print parameters]
from NFNet import Network, get_nn_pairs, binary_basis, unpacknbits, initialize_sparse

N = 10 # 10 qubits
conn_list = [[1, 6], [2, 3], [1, 4]] # We consider the Hamiltonian interaction between qubit 1 and 6, 2 and 3, and 1 and 4 (zero indexed).
L = len(conn_list)
circuit = Network(conn_list, N) # Create an evolution simulation, calling the NFNet.Network class.

# Print the parameters of the circuit evolution:
for p in circuit.parameters():
    print(p)
\end{lstlisting}
which prints the following:
\begin{lstlisting}[language=Python]
Parameter containing:
tensor([0.1341], requires_grad=True)
Parameter containing:
tensor([2.1245], requires_grad=True)
Parameter containing:
tensor([1.9884], requires_grad=True)
Parameter containing:
tensor([2.4108], requires_grad=True)
param Parameter containing:
tensor([0.5980], requires_grad=True)
param Parameter containing:
tensor([2.2515], requires_grad=True)
\end{lstlisting}

To manually set the circuit's rotation parameters, call ``Network.manual\textunderscore set\textunderscore params()'', which takes a PyToch tensor of shape $(L, 4)$ or $(L, 6)$ (recall that each $(i,j)$ Fermion mode interaction Hamiltonian is parametrized by $4$ real numbers in the Fermion-preserving case and $6$ in the general case).
\begin{lstlisting}[language=Python, caption=Manually set parameters]
from NFNet import Network, get_nn_pairs, binary_basis, unpacknbits, initialize_sparse

N = 10 # 10 qubits
conn_list = [[1, 6], [2, 3], [1, 4]] # We consider the Hamiltonian interaction between qubit 1 and 6, 2 and 3, and 1 and 4 (zero indexed).
L = len(conn_list)
circuit = Network(conn_list, N) # Create an evolution simulation, calling the NFNet.Network class.

# Randomly sample the real a, b, c, d parameters, where a = b_ii, b = b_jj, c+di = b_ij
params = torch.tensor(math.pi) * torch.rand((L, 4)) # Randomly initialize parameters
circuit.manual_set_params(params) # Set the evolution parameters
\end{lstlisting}

\subsection{Compute full-system measurement probability}
NFNet supports measurements on all qubits or a subset of qubits. We separate these two functionalities into two class methods. For the case of a full measurement, an ``NFNet.Network'' object takes two inputs, $\y$ and $\x$, where $\x$ an $N$-bit string representing is the single basis state $|x\rangle$ that is input to the circuit, and $\y$ is an $N$-bit string corresponding to a specific measurement result. NFNet.Network class computes $p_{\theta}(y|x)=|\langle y|U_{\theta}|x\rangle|^2$ via the ``NFNet.Network.forward(y, x)'' function.

Because NFNet supports batch processing, $\x$ and $\y$ both have shape (batch\textunderscore size, N). In the following example, we use a batch size of $1$ circuit whose nearest-neighbor connectivity scheme is repeated $5$ times:
\begin{lstlisting}[language=Python, caption=Compute full-measurement probability]
from NFNet import Network, get_nn_pairs, binary_basis, unpacknbits, initialize_sparse

N = 10 # 10 qubits
conn_list = [[np.min(p), np.max(p)] for p in get_nn_pairs((N,))]*5 # Fully connected (zero-indexed Fermion modes)
L = len(conn_list)
circuit = Network(conn_list, N) # Create an evolution simulation, calling the NFNet.Network class.
x_batch = torch.tensor([[1,0]*(N//2)]) #1010...10
y_batch = torch.tensor([[1]*(N//2)+[0]*(N//2)]) #11...100...0
prob = torch.abs(circuit.forward(y_batch, x_batch))**2
print(prob) # p(y|x)
\end{lstlisting}
\begin{lstlisting}
    tensor([0.0021], grad_fn=<PowBackward0>)
\end{lstlisting}

\subsection{Compute sub-system measurement probability}
NFNet also supports measurements on subsystems. An ``NFNet.Network'' object takes two inputs, $\y$ and $\x$, where $\x$ an $N$-bit string representing is the single basis state $|x\rangle$ that is input to the circuit, and $\y$ is an $k$-bit string ($k\leq N$) corresponding to a specific measurement result \textbf{on a subset of qubits}. NFNet.Network class computes $p_{\theta}(y|x)=\langle x|U_{\theta}^{\dagger}\Pi_y U_{\theta}|x\rangle$ via the ``NFNet.Network.forward\textunderscore partial\textunderscore observation(y, x)'' function.

Because NFNet supports batch processing, $\x$ and $\y$ both have shape (batch\textunderscore size, N). On top of $\x$ and $\y$, a third input, an $N$-bit measurement mask string $\boldsymbol{m}$ is required, where we only measure the $i$-th qubit if $\boldsymbol{m}_{i}=1$. In the following example, we use a batch size of $1$ circuit whose pair-wise fully connected scheme is repeated $5$ times:
\begin{lstlisting}[language=Python, caption=Compute sub-measurement probability]
from NFNet import Network, get_nn_pairs, binary_basis, unpacknbits, initialize_sparse

N = 10 # 10 qubits
conn_list = [[np.min(p), np.max(p)] for p in get_nn_pairs((N,))]*5 # Fully connected (zero-indexed Fermion modes)
L = len(conn_list)
circuit = Network(conn_list, N) # Create an evolution simulation, calling the NFNet.Network class.
x_batch = torch.tensor([[1,0]*(N//2)]) # Input state N-bit string 1010101010
y_batch = torch.tensor([[1,1,1,0,0]]) # Measurement outcome N//2 bit string 11100
mask_batch = torch.tensor([[1]*(N//2)+[0]*(N//2)])# Mask 1111100000 only measure the first 5 qubits
prob = circuit.forward_partial_observation(y_batch, x_batch, mask_batch).real
print(prob) # p(y|x)
\end{lstlisting}
\begin{lstlisting}
    tensor([0.0174], grad_fn=<SelectBackward0>)
\end{lstlisting}

\section{NFNet Use Case Tutorials} \label{sec:demos}
The following use cases are also included in the NFNet release in forms of interactive Jupyter Notebooks.
\subsection{Simulate a continuous Hamiltonian evolution} \label{subsec:ham}
In this example, we first show how to use the free Fermion formalism to map a continuous Hamiltonian (in terms of Pauli matrices) to the Fermionic creators and annihilators. Then we simulate the time evolution using the free Fermion simulation. 

We measure on all qubits in the Z basis after the evolution and compare the measurement probabilities with the exact diagonalization simulation result.



In this example we look at a Fermion-preserving Hamiltonian on two fermion modes $i$ and $j$, written as (assuming $i<j$):

\begin{equation}
\mathcal{H}_{Fermi}=b_{i i} a_i^{\dagger} a_i+b_{j j} a_j^{\dagger} a_j+b_{i j} a_i^{\dagger} a_j+b_{i j}^* a_j^{\dagger} a_i.
\end{equation}

Because $H$ is hermitian, $a_{ii}, a_{jj}$ are real. We express $b_{ij}$ as $c+di$, where $c$ and $d$ are real continuous numbers.

In the most general case, $H$ corresponds to a Hamiltonian (in Pauli matrix form) which describes the interaction between qubits $i$ and $j$:

\begin{equation}
    \begin{aligned}
        \mathcal{H}_{Pauli} &=b_{i i} a_i^{\dagger} a_i+b_{j j} a_j^{\dagger} a_j+(c+di) a_i^{\dagger} a_j+(c-di) a_j^{\dagger} a_i \\ &= -\frac{1}{2} \left(b_{ii} Z_i + b_{jj} Z_j\right)+\frac{c}{2} \left(Y_i \left(\prod_{k=i+1}^{j-1} Z_k\right) Y_j + X_i \left(\prod_{k=i+1}^{j-1} Z_k\right) X_j \right)\\
        &+\frac{d}{2}\left(Y_i \left(\prod_{k=i+1}^{j-1} Z_k\right) X_j - X_i \left(\prod_{k=i+1}^{j-1} Z_k\right) Y_j \right),
    \end{aligned}
\end{equation}

In this example we first simulate the Z-basis measurement probabilities on the final state $|\psi_{f} \rangle = e^{-i\mathcal{H} t} |\psi_0 \rangle$ after the time evolution of $e^{-i\mathcal{H} t}$ on an input product state $|\psi_0 \rangle$. This simulation is computed in polynomial time of the number of qubits $N$.

We then compare the measurement probabilities with the exact diagonalization calculation, keeping track of the full state. This scales exponential time of the number of qubits $N$.

\begin{lstlisting}[language=Python, caption=Demo 1: Simulate a continuous Hamiltonian evolution]
# Load NFNet modules:
from NFNet import Network, get_nn_pairs, binary_basis, unpacknbits, initialize_sparse
import numpy as np
import matplotlib.pyplot as plt
import scipy
import scipy.linalg
import time
import torch # PyTorch is required for this demo
import math

# First we create a simulation object, from the Fermion "Network" class.
N = 10 # 10 qubits
conn_list = [[1, 6]] # We consider the Hamiltonian interaction between qubit 1 and 6 (zero indexed)
L = len(conn_list)
evolution = Network(conn_list, N) # Create an evolution simulation, calling my PyFerm module
x_input = torch.tensor([[1,0]*(N//2)]) # The input state is the 1010101... single basis state

# Randomly sample the real a, b, c, d parameters, where a = b_ii, b = b_jj, c+di = b_ij
params_abcd = torch.tensor(math.pi) * torch.rand((L, 4)) # Randomly initialize parameters
t = 1.5 # For how long do we evolve the state
evolution.manual_set_params(t*params_abcd) # Set the evolution parameters

# Now calculate the probability of P(y|x_input) for all possible y bitstrings:
basis_m_n = torch.tensor(binary_basis(geometry=(N,))) # shape is number of y bitstrings by 2^N
probs = np.zeros(2**N)
# The evolution Network object can handle batch processing of p(y_batch|x_input_batch)
y_batch = basis_m_n
x_batch = x_input.repeat(y_batch.shape[0], 1) # shape is number of y bitstrings by 2^N

# This is a mask tensor that tells the network which qubits to measure at the end
# For example, [1111111111] indicates that we meausure on all 10 qubits in the end
# [1111100000] indicates that we measure on the first 5 qubits in the end
measure_mask_batch = (torch.tensor([[1]*N])).repeat(y_batch.shape[0], 1) # shape is number of y bitstrings by N

ts = time.time()
# The returned measurement probabilities are torch tensors, don't forget to detach and convert to numpy
probs_batch = evolution.forward_partial_observation(y_batch, x_batch, measure_mask_batch).detach().numpy()
tf = time.time()
\end{lstlisting}

If we compare to exact diagonalization:
\begin{lstlisting}[language=Python]
import qiskit # Use qiskit to conveniently convert bit strings to statevectors

x_string = '10'*int(N/2)+'1' if N%2==1 else '10'*int(N/2)
init_state_numpy = qiskit.quantum_info.Statevector.from_label(x_string).data
print('input x string', x_string)
ts = time.time()

# Initialize the exact e^{-iHt} evolution matrix
exp_iH_exact = np.eye(2**N)

conn = conn_list[0]
a, b, c, d = params_abcd.detach().numpy()[0]
H_exact = initialize_sparse(N, conn[0], conn[1], a, b, c, d)
exp_iH_exact = (scipy.linalg.expm(-t*1.0j*H_exact))@exp_iH_exact

state_exact = np.matmul(exp_iH_exact, init_state_numpy[:,None])
probs_exact = (np.abs(state_exact)**2).squeeze()
tf = time.time()

# Calculate the sum of absolute differences in density values
tv = np.abs(probs_batch-probs_exact).sum()
print('Total variation between Fermion and Exact simulations', tv)
# Plot the probabilities from exact diagonalization to fermion simulation. Yay they match!
plt.figure(figsize=(8,5))
plt.plot(probs_batch, '^-')
plt.plot(probs_exact, 'x-')
plt.xlabel('y (in base 10)')
plt.ylabel('prob')
plt.legend(['Fermion probs', 'Exact Diagalization probs'])
\end{lstlisting}
\begin{figure}[h!]
    \centering
    \includegraphics[scale=0.25]{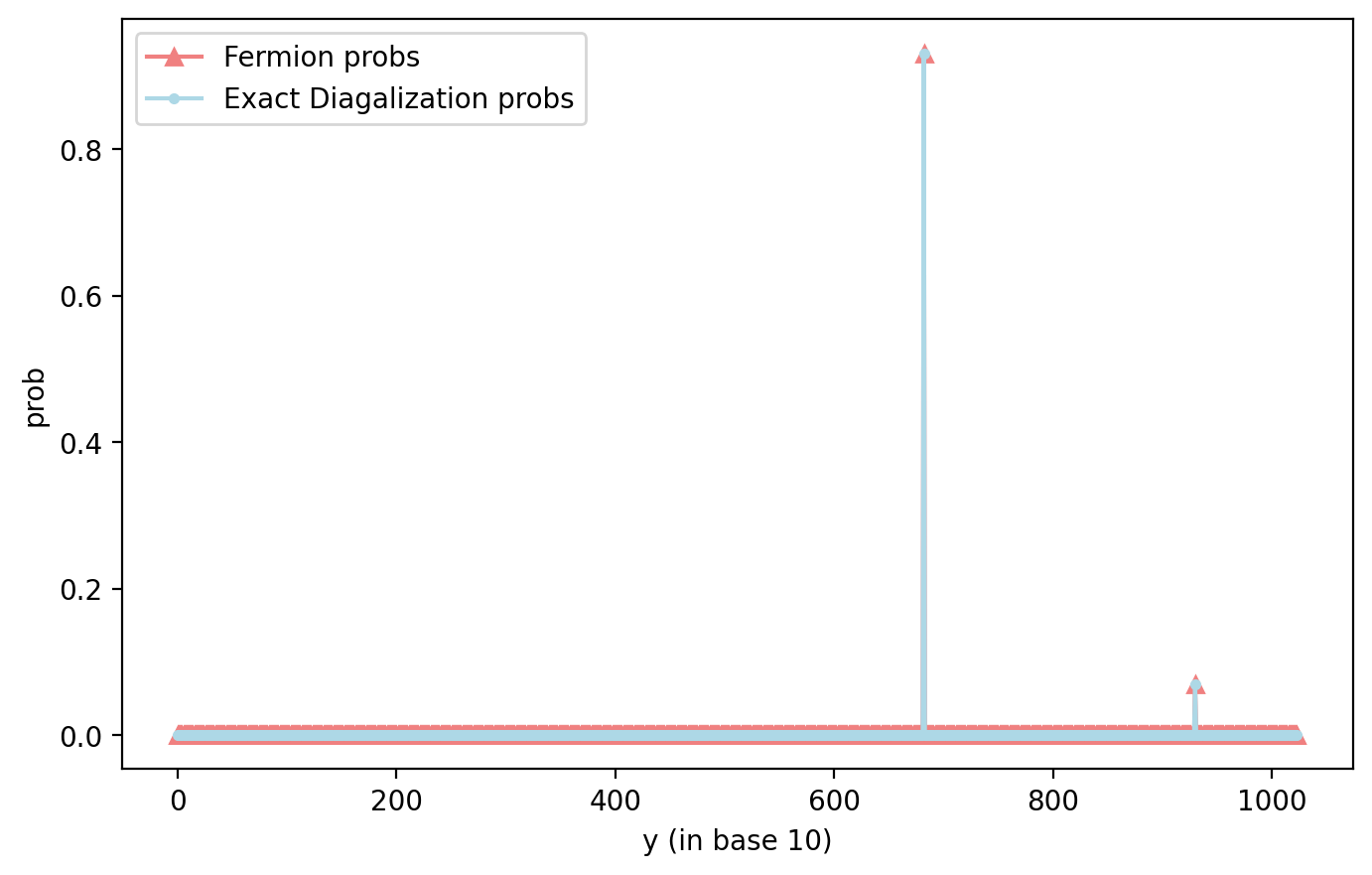}
    \caption{The probabilities of measurements as calculated by Fermionic and exact diagonalization for a Fermion-preserving continuous-time Hamiltonian evolution.}
    \label{fig:my_label}
\end{figure}

\subsection{Compare runtimes and simulation accuracy} \label{subsec:runtime-compare}
In this example, we compare the runtimes of the free Fermion simulation versus the exact diagonalization simulation, on the same continuous Hamiltonian evolution as in Demo 1:
\begin{equation}
    \begin{aligned}
        \mathcal{H}_{Pauli} &=b_{i i} a_i^{\dagger} a_i+b_{j j} a_j^{\dagger} a_j+(c+di) a_i^{\dagger} a_j+(c-di) a_j^{\dagger} a_i \\ &= -\frac{1}{2} \left(b_{ii} Z_i + b_{jj} Z_j\right)+\frac{c}{2} \left(Y_i \left(\prod_{k=i+1}^{j-1} Z_k\right) Y_j + X_i \left(\prod_{k=i+1}^{j-1} Z_k\right) X_j \right)\\
        &+\frac{d}{2}\left(Y_i \left(\prod_{k=i+1}^{j-1} Z_k\right) X_j - X_i \left(\prod_{k=i+1}^{j-1} Z_k\right) Y_j \right).
    \end{aligned}
\end{equation}

We conduct the runtime comparison with the following benchmarking procedure (N is the number of qubits in the system):

Set x\textunderscore input state as a simple product state (in this case we use $1010...10$).

For N in [2, 4, 6, 8, 10]:
\begin{enumerate}
    \item Evolve the initial state by $e^{-i\mathcal{H}t}$ to a final state $|\psi_{f}\rangle$ (for the case of exact diagonalization).
    \item Calculating the probability of observing bit string 1010...10, which is $p(y=101010|x=101010)=|\langle1010...10| e^{-i\mathcal{H}t}|1010...10\rangle|^2$
    \item Record and compare the time to get $p(y=101010|x=101010)$ for Fermion simulation and for exact diagonalization
\end{enumerate}

\begin{lstlisting}[language=Python]
# Load NFNet modules:
from Utils_torch_version import Network, get_nn_pairs, binary_basis, unpacknbits, initialize_sparse
import numpy as np
import matplotlib.pyplot as plt
import scipy.linalg
import time
import torch # PyTorch is required for this demo
import math
import qiskit # Qiskit is required for this demo

Ns = [2, 4, 6, 8, 10]
reps = 5
avg_time_at_N_fermion, avg_time_at_N_exact = [], []
std_time_at_N_fermion, std_time_at_N_exact = [], []
avg_abs_diff_prob_at_N, std_abs_diff_prob_at_N = [], []

for N in Ns:
    print('start N: ', N)
    conn_list = [[0, N-1]] 
    times_fermion = []
    times_exact = []
    diffs = [] # to collect the differences between two simulation methods
    for rep in range(reps):
        L = len(conn_list) # Number of layers
        # initiliaze the evolution
        evolution = Network(conn_list, N)
        
        # The 1010101... basis state as input
        x = torch.tensor([[1,0]*int(N/2)]) if N%2==0 else torch.tensor([[1,0]*int(N/2)+[1]])  
        
        # Create statevector for ED
        x_string = '10'*int(N/2)+'1' if N%2==1 else '10'*int(N/2)
        init_state_numpy = qiskit.quantum_info.Statevector.from_label(x_string).data

        # Fix this, the parameters are defined differently now for the pytorch implementation
        params_m = torch.tensor(math.pi) * torch.rand((L, 4))
        
        t = 1.5 # Time to evolve the state for. Can set to other arbitrary values
        evolution.manual_set_params(t*params_m) # load parameterse to the evolution

        #basis_m_n = torch.tensor(binary_basis(geometry=(N,)))

        probs = torch.zeros(2**(N), dtype=torch.cfloat)
        ts = time.time()
        
        y = x # Calculate the probability of observing 101010... Can choose other bit strings too
        # A mask that tells the network which qubits to measure in the end
        # Eg. 11111 means measuring all qubits, 11000, means measurin the first two qubits
        sub_mask = torch.tensor([[1]*N])
        
        # Since we aren't doing optimization, simply record the numerical value, detach gradients.
        probs_fermion = \
            evolution.forward_partial_observation(y, x, sub_mask).detach().numpy()
        
        # Record the Fermion simulation's runtime
        tf = time.time()
        times_fermion.append(tf - ts)

        ts = time.time() # Reset timer
        
        # Calculate the evolution matrix e^{-iHt}
        exp_iH_exact = np.eye(2**N)
        for l in range(L):
            conn = conn_list[l]
            a, b, c, d = params_m.detach().numpy()[l]
            H_exact = initialize_sparse(N, conn[0], conn[1], a, b, c, d)
            exp_iH_exact = (scipy.linalg.expm(-t*1.0j*H_exact))@exp_iH_exact # 1.0j or -1.0j?
        
        # Exact final state via ED
        state_exact = np.matmul(exp_iH_exact, init_state_numpy[:,None])
        
        ind = int(''.join([str(i) for i in y.detach().numpy().squeeze()]), 2) 
        # The measurement probabilities calculated via ED:
        probs_exact = (np.abs(state_exact[ind])**2).squeeze()
        tf = time.time()
        times_exact.append(tf - ts)

        diff = probs_fermion-probs_exact
        diffs.append(np.abs(diff))
        # See if two simulations agree:
        print('Fermion_prob, ED_prob, difference', (probs_fermion, probs_exact, diff))

    avg_time_fermion = sum(times_fermion)/reps
    std_time_fermion = np.std(times_fermion)

    avg_time_exact = sum(times_exact)/reps
    std_time_exact = np.std(times_exact)


    avg_time_at_N_fermion.append(avg_time_fermion)
    avg_time_at_N_exact.append(avg_time_exact)
    
    std_time_at_N_fermion.append(std_time_fermion)
    std_time_at_N_exact.append(std_time_exact)
    
    avg_abs_diff_prob_at_N.append(np.mean(diffs))
    std_abs_diff_prob_at_N.append(np.std(diffs))

    avg_time_at_N_fermion = np.array(avg_time_at_N_fermion)

# Plots
avg_time_at_N_exact = np.array(avg_time_at_N_exact)

std_time_at_N_fermion = np.array(std_time_at_N_fermion)
std_time_at_N_exact = np.array(std_time_at_N_exact)
    
avg_abs_diff_prob_at_N = np.array(avg_abs_diff_prob_at_N)
std_abs_diff_prob_at_N = np.array(std_abs_diff_prob_at_N)
    
print('avg_time_at_N_fermion', avg_time_at_N_fermion)
print('avg_time_at_N_exact', avg_time_at_N_exact)

plt.figure(figsize=(10, 6))
# plt.plot(Ns, avg_time_at_N_exact+std_time_exact, '^')
# plt.plot(Ns, avg_time_at_N_exact-std_time_exact, 'v')
plt.fill_between(Ns, avg_time_at_N_exact-std_time_at_N_exact, \
                     avg_time_at_N_exact+std_time_at_N_exact, color='mistyrose')
plt.fill_between(Ns, avg_time_at_N_fermion-std_time_at_N_fermion, \
                     avg_time_at_N_fermion+std_time_at_N_fermion, color='lightblue')

plt.plot(Ns, avg_time_at_N_fermion, "^-")
plt.plot(Ns, avg_time_at_N_exact, "v-")
plt.legend(['avg_time_at_N_exact', 'avg_time_at_N_fermion'], fontsize=20)
# plt.title('Runtime vs N-qubit sizes')
# plt.yscale('log')
plt.ylabel('runtime', fontsize=20)
plt.xlabel('number of qubits', fontsize=20)
plt.xticks(fontsize=20)
plt.yticks(fontsize=20)

plt.figure(figsize=(10, 6))
plt.fill_between(Ns, avg_abs_diff_prob_at_N-std_abs_diff_prob_at_N, \
                 avg_abs_diff_prob_at_N+std_abs_diff_prob_at_N, color='lightblue')
plt.plot(Ns, avg_abs_diff_prob_at_N, '*-')
plt.xticks(fontsize=20)
plt.yticks(fontsize=20)
# plt.title('Avg of absolute val of prediction difference')
plt.xlabel('number of qubits', fontsize=20)
plt.ylabel('|P_f - P_e|', fontsize=20)
\end{lstlisting}

\begin{figure}[h!]
  \centering
  \begin{minipage}[b]{0.47\textwidth}
    \includegraphics[width=\textwidth]{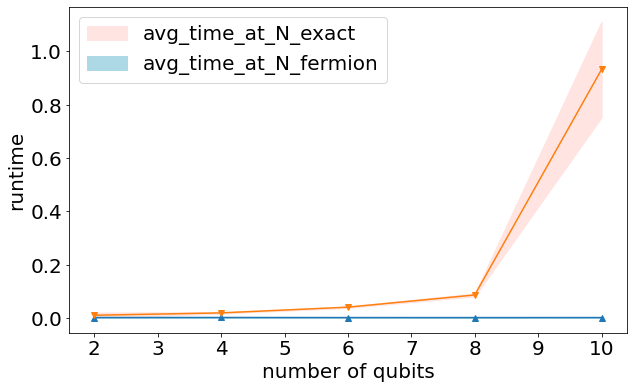}
    \caption{Average runtime versus number of qubits ($N$) in system. Shaded area is one standard above and below the mean for $20$ repeated runs at each $N$. We see an exponential increase in computing time for exact diagonalization.}
  \end{minipage}
  \hfill
  \begin{minipage}[b]{0.50\textwidth}
    \includegraphics[width=\textwidth]{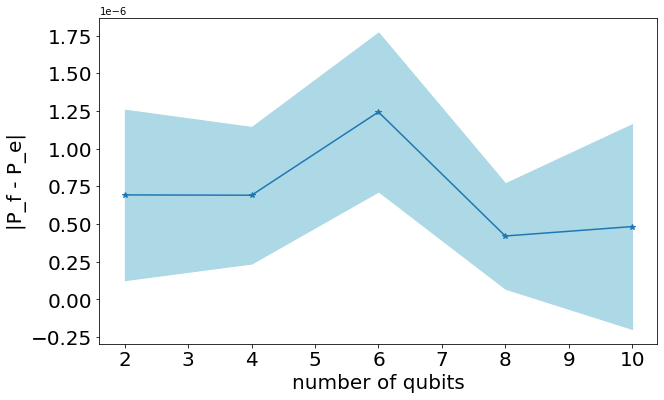}
    \caption{The average of density evaluation difference between exact diagonalization and NFNet, i.e., $|p_{fermion}(y=x|x)-p_{exact}(y=x|x)|$ at each N. The numerical differences are in the order of 1e-6.}
  \end{minipage}
\end{figure}

\subsection{Multi-layer Circuit and 512+ qubit measurement computation} \label{sec:multicircuit-512}
The power of NFNet really comes down to the fact that it can efficiently model a series of quantum gates, each corresponds to a continuous Hamiltonian time evolution, forming a multi-layer parameterized circuit. The network takes an input state denoted by a bitstring $\x$, and computes the probability density value of observing any bitstring $\y$ in polynomial time of the number of qubits. 

More formally, a ``NFNet.Network'' object is a parameterized computing model $f_{\theta}(y;x)$, which takes in two bitstrings $y$ and $x$ and outputs the probability density $p(y|x)$. This model is multi-purpose, and can be used as a general quantum ansatz, a quantum generative/classification model, a classical quantum simulation platform for quantum systems, etc.

Here we demonstrate how to build a multi-layer parameterized quantum circuit with NFNet classes and functions.

We define one circuit gate as $H_g$, and each gate corresponds to a continuous evolution $e^{-i\mathcal{H}t}$. A parameterized is a series of such evolutions:
\begin{equation}
    U_{total} = U_L U_{L-1}...U_{2} U_1,
\end{equation}
where $U_l$ is the $l$-th "layer" defined as:
\begin{equation}
    U_l = e^{-i\mathcal{H}_{l}t}
\end{equation}

where $\mathcal{H}_{l}$ concerns interactionos of two Fermionic modes $i<j$, not necessarily nearest-neighbor. We first compute the measurement probabilities of all possible outcome bitstrings $\y$ with NFNet, and then compare these probabilities with those computed by exact diagonalization. The circuit we will use is a full nearest-neighbor scheme repeated 3 times.


\begin{lstlisting}[language=Python]
# Load NFNet modules:
from Utils_torch_version import Network, get_nn_pairs, binary_basis, unpacknbits, initialize_sparse
import numpy as np
import matplotlib.pyplot as plt
import scipy.linalg
import time
import torch # PyTorch is required for this demo
import qiskit # Qiskit is required for this demo
import math

# Define the model configurations
N = 10
# Connectivity (i,j) of each "layer" where i<j.
conn_list = [[np.min(p), np.max(p)] for p in get_nn_pairs((N,))]*3 # zero indexed and should not be periodic (not a closed circle)
L = len(conn_list)

# Define the circuit, calling the FermiNet "Network" class
circuit = Network(conn_list, N)

# Define an input state, in this case, a simple 1010...1010 basis state
# Note: x should follow the shape (batch_size, 2^N), even for a batch_size=1
x = torch.tensor([[1,0]*int(N/2)]) if N%2==0 else torch.tensor([[1,0]*int(N/2)+[1]])

# When the circuit is initialized, the parameters are randomly initiailized by pi * Uniform([0,1]):
for param in circuit.parameters():
    print('Network parameters', param)

# Hereby we take an extra step to show you how to manually set these parameters:
params_m = torch.tensor(math.pi) * torch.rand((L, 4)) # Each layer has 4 rotations
print('params_m',params_m)

circuit.manual_set_params(params_m)

# Compute the probabilities of observing all possible bitstrings
# FermiNet supports batch processing too!

probs = torch.zeros(2**N, dtype=torch.cfloat) # To collect the probability densitives evaluations
basis_m_n = torch.tensor(binary_basis(geometry=(N,)))

# This demonstrates how to process P(y|x) by batches: P(y_batch|x_batch) results in a batch of probabilities
batch_size = 16
n_batches = len(probs)//batch_size if len(probs)%batch_size == 0 else len(probs)//batch_size+1

ts = time.time()
for i in range(n_batches):
    y_batch = basis_m_n[batch_size*i : batch_size*(i+1)] # shape is (batch_size, 2^N)
    x_batch = x.repeat(y_batch.shape[0], 1) # shape is (batch_size, 2^N)
    
    # In this example, we measure all the qubits at the end, so the mask is 1111..111
    sub_mask_batch = (torch.tensor([ [1]*N ])).repeat(y_batch.shape[0], 1) # shape is (batch_size, 2^N)
    # call circuit.forward_partial_observation() to get the P(y|x) values
    probs_batch = circuit.forward_partial_observation(y_batch, x_batch, sub_mask_batch)
    # Now put these probs_batch values into correct positions
    probs[batch_size*i : batch_size*(i+1)] = probs_batch
print('probs', probs)
# The returned probs are torch tensors, so detach the gradients and convert to numpy:
probs = probs.detach().numpy().real

tf = time.time()
print('time lapsed: ', str(tf-ts))
\end{lstlisting}
Then we repeat the probability computations with exact diagonalization:
\begin{lstlisting}[language=Python]
x_string = '10'*int(N/2)+'1' if N%2==1 else '10'*int(N/2)
init_state_numpy = qiskit.quantum_info.Statevector.from_label(x_string).data

ts = time.time()
# Initialize the evolution matrix
exp_iH_exact = np.eye(2**N)
# Iterate each layer and get the composition via matrix multiplication
# Use the same params_m rotations defined earliers
for l in range(L):
    conn = conn_list[l]
    a, b, c, d = params_m.detach().numpy()[l]
    H_exact = initialize_sparse(N, conn[0], conn[1], a, b, c, d)
    exp_iH_exact = (scipy.linalg.expm(-1.0j*H_exact))@exp_iH_exact 

state_exact = np.matmul(exp_iH_exact, init_state_numpy[:,None])
probs_exact = (np.abs(state_exact)**2).squeeze()

tf = time.time()
\end{lstlisting}
Then we compare the probabilities. The difference is measured via the total variation: $TV=\sum_y |p(y|x)-q(y|x)|$, where $p(y|x)$ is the Fermion-simulated density and $q(y|x)$ is exact-diagonalizaion-calculated density. In this specific example, the TV value is 1.011e-12.
\begin{lstlisting}[language=Python]
# Plot two against each other
fig, ax = plt.subplots(figsize=(20, 10))
xs = np.arange(2**N)
ax.plot(xs, probs, color='lightcoral')
ax.plot(xs, probs_exact, '^', color='lightblue')

ax.set_xlabel('bit strings in base 10', fontsize=15)
ax.set_ylabel('probability', fontsize=15)

ax.legend(['Fermion Simulation', 'Exact Diagonalization'], fontsize=15)
tv = (np.abs(probs-probs_exact)**2).sum()
ax.set_title('Total variation: '+str(tv))
\end{lstlisting}
\begin{figure}
    \centering
    \includegraphics[scale=0.33]{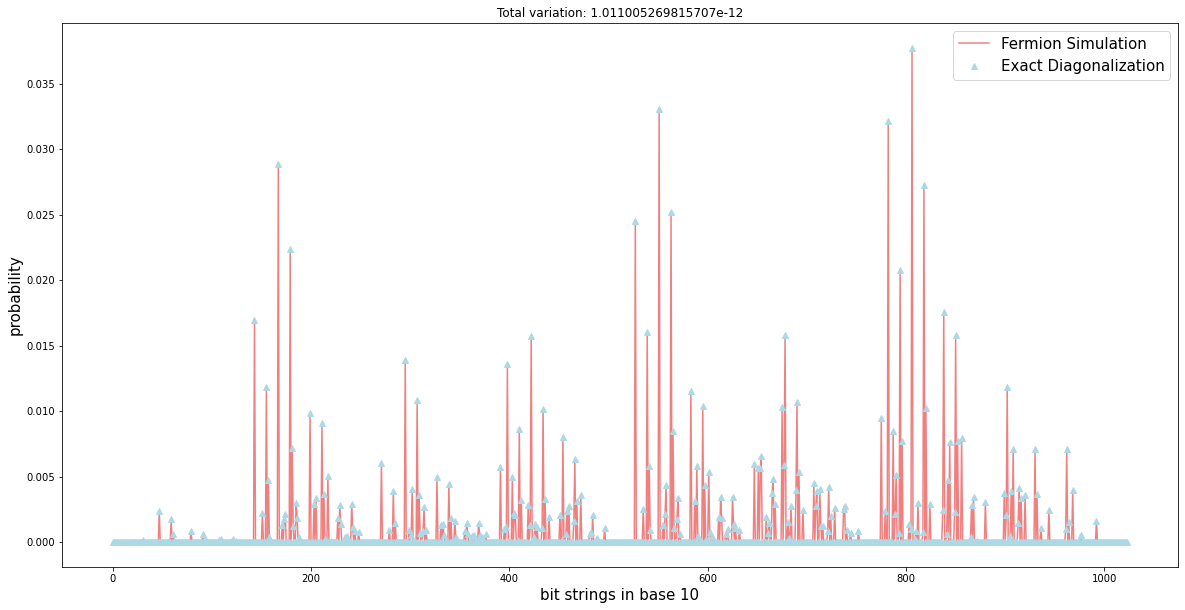}
    \caption{Comparison between the Fermion simulated distribution and exact-diagonalization-computed distribution. The TV value is 1.011e-12.}
    \label{fig:my_label}
\end{figure}

With the same circuit architecture, we now look at the runtime scaling of the measurement probability calculation using NFNet at the mega scale. Namely, for qubit sizes ranging from $100$ to $1000$,  we compute the measurement probability $p_{\theta}(y|x)$ and record the runtime.
\begin{lstlisting}[language=Python]
from Utils_torch_version import Network, get_nn_pairs, binary_basis, unpacknbits, initialize_sparse

import numpy as np
import matplotlib.pyplot as plt
#from scipy import sparse
#import scipy
import scipy.linalg
import time
import torch # PyTorch is required for this demo
import math
import qiskit # Qiskit is required for this demo

Ns = [100*(i+1) for i in range(10)]
reps = 1
avg_time_at_N_fermion = []
std_time_at_N_fermion = []
probs_fermion_list = []
for N in Ns:
    print('start N: ', N)
    conn_list = [[np.min(p), np.max(p)] for p in get_nn_pairs(geometry=(N,))]*1 # zero indexed
    times_fermion = []
    times_exact = []
    diffs = [] # to collect the differences between two simulation methods
    for rep in range(reps):
        L = len(conn_list) # Number of layers
        # initiliaze the evolution
        evolution = Network(conn_list, N)
        
        # The 1010101... basis state as input
        x = torch.tensor([ [1,0]*int(N/2) ]) if N%2==0 else torch.tensor([[1,0]*int(N/2)+[1]])  
    
        ts = time.time()
        
        y = x # Calculate the probability of observing 101010... Can choose other bit strings too
        sub_mask = torch.tensor([ [1]*N ])# Eg. 11111 means measuring all qubits, 11000, means measurin the first two qubits
        
        # Since we aren't doing optimization, simply record the numerical value, detach gradients.
        probs_fermion = \
            evolution.forward_partial_observation(y, x, sub_mask).detach().numpy()
        print('debug probs_fermion', probs_fermion)
        probs_fermion_list.append(probs_fermion)
        # Record the Fermion simulation's runtime
        tf = time.time()
        times_fermion.append(tf - ts)

        ts = time.time() # Reset timer
        
    avg_time_fermion = sum(times_fermion)/reps
    std_time_fermion = np.std(times_fermion)
    avg_time_at_N_fermion.append(avg_time_fermion)
    std_time_at_N_fermion.append(std_time_fermion)

avg_time_at_N_fermion = np.array(avg_time_at_N_fermion)
std_time_at_N_fermion = np.array(std_time_at_N_fermion)
    
plt.figure(figsize=(10, 6))
plt.fill_between(Ns, avg_time_at_N_fermion-std_time_at_N_fermion, \
                     avg_time_at_N_fermion+std_time_at_N_fermion, color='lightblue')
plt.plot(Ns, avg_time_at_N_fermion, "^-", color='lightcoral')
plt.yscale('log')
plt.ylabel('runtime', fontsize=20)
plt.xlabel('number of qubits', fontsize=20)
plt.xticks(fontsize=20)
plt.yticks(fontsize=20)
\end{lstlisting}
\begin{figure}[h!]
    \centering
    \includegraphics[scale=0.5]{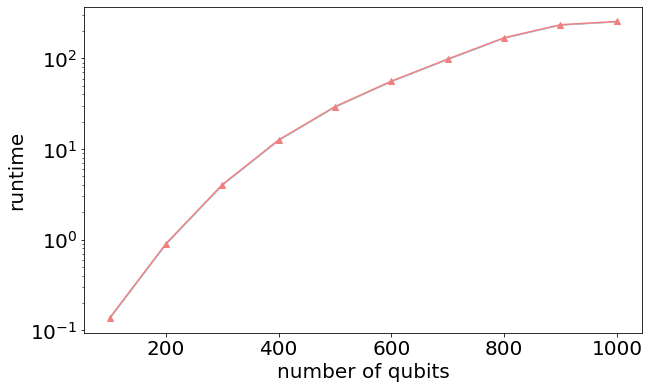}
    \caption{Runtime (in seconds) scaling of measurement probability computation at different system sizes. The plot indicates polynomial scaling.}
    \label{fig:my_label}
\end{figure}

\subsection{196-qubit Single Pattern Recognition} \label{subsec:196-mnist}
In this example, we test the circuit's trainability at the 196-qubit level. Namely, we optimize the circuit parameters such that when we measure the circuit, we get a specific bit string pattern with high probability $\approx 1$. Namely, given a target pattern $\y$, we optimize the circuit parameters to maximize $p_{\theta}(y|x)$, given an arbitrary input state $|x\rangle$ with bit string $\x$. The target bit string is chosen to be a binary pattern resembling an MNIST digit ``9'', calling the inbuilt NFNet function ``NFNet.mnist\textunderscore single\textunderscore pdf''. In this proof-of-concept example, we set $\x=\y$ as the output probabilities are extremely sparse, and searching for an input $\x$ with nonzero output density is exponentially prohibitive. Therefore, at optimum, the unitary transformation is the identity.

\begin{lstlisting}[language=Python]
from Utils_torch_version import Network, get_nn_pairs, unpacknbits, initialize_sparse, mnist_single_pdf
import numpy as np
import matplotlib.pyplot as plt
import time
import torch # PyTorch is required for this demo
import qiskit # Qiskit is required for this demo
import math
import seaborn as sns
import torch
import matplotlib.pyplot as plt
import qiskit
import numpy as np

q_data, data_digit = mnist_single_pdf(9, 17, 14)
q_data_img = (q_data>0.001).int()
q_data = q_data_img.flatten().unsqueeze(0)

n_fermions = 68
factor = 1
N = 14**2
conn_list = [[np.min(p), np.max(p)] for p in get_nn_pairs((N,))]*1
print('conn_list', conn_list)
L = len(conn_list) # Number of layers

# Initialize the circuit
circuit = Network(conn_list, N)
print('circuit.parameters()', circuit.parameters())

# Initialize the circuit
circuit = Network(conn_list, N)

beta1 = 0.5
lr_G = 1e-1
optimizerG = torch.optim.Adam(circuit.parameters(), lr=lr_G, betas=(beta1, 0.999))

prob_list = []
for itr in range(20):
    circuit.zero_grad() # clear the parameter gradients
    y_batch = q_data # Data pattern
    x_batch = q_data # In this case, set x=y 
    sub_mask_batch = (torch.tensor([ [1]*(N) ])).repeat(y_batch.shape[0], 1)
    ts = time.time()
    probs_batch = -circuit.forward_partial_observation(y_batch, x_batch, sub_mask_batch).real
    print('iter, obj', (itr, probs_batch))
    
    probs_batch.backward() # Get gradients
    optimizerG.step() # Update
    tf = time.time()
    print('total time', tf-ts)
    prob_list.append(probs_batch)
prob_list = torch.tensor(prob_list)
plt.figure(dpi=200)
plt.plot(-prob_list, color='lightcoral')
plt.xlabel('iteration')
plt.ylabel('prob')
plt.title('prob of outputing target pattern')
\end{lstlisting}

\begin{figure}[!h] 
  \centering
  \begin{minipage}[b]{0.3\textwidth}
    \includegraphics[width=\textwidth]{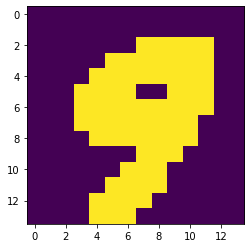}
    \label{recognition}
  \end{minipage}
  \hfill
  \begin{minipage}[b]{0.45\textwidth}
    \includegraphics[width=\textwidth]{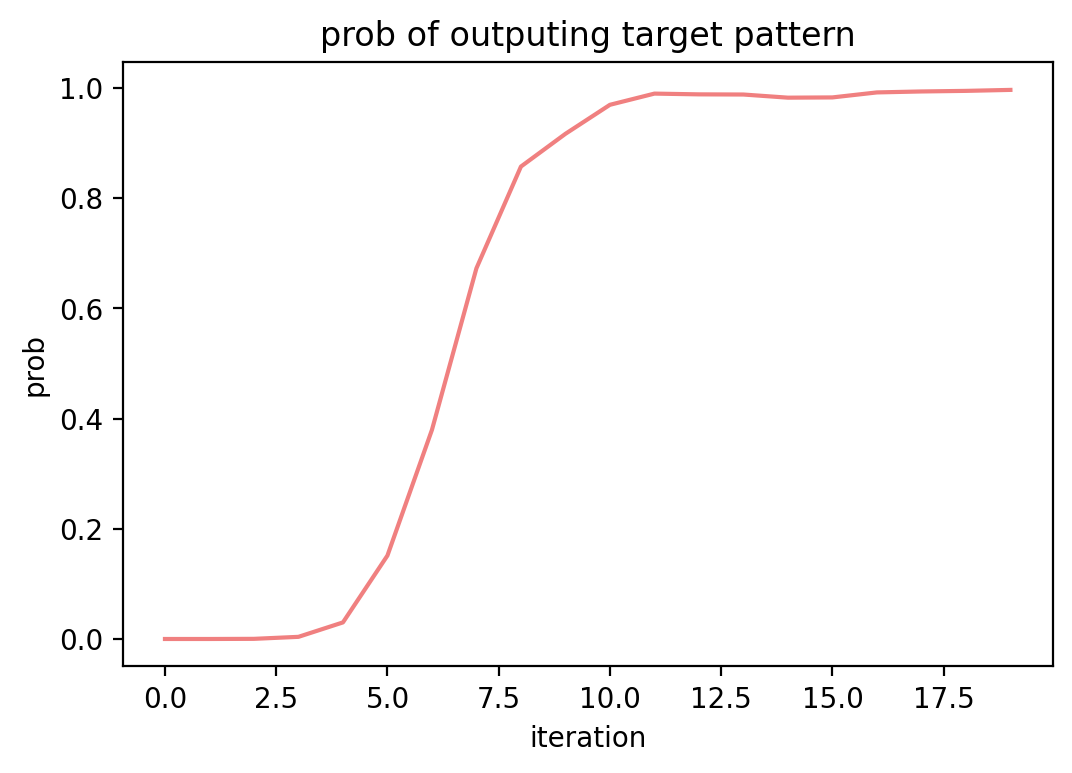}
  \end{minipage}
  \caption{Left: Target pattern. Right: probability of observing this pattern vs iterations.}
\end{figure}

\subsection{Single MNIST generation with Quantum Born Machine} \label{subsec:born}
The quantum circuit Born Machine (QCBM) is a quantum generative model[Ref]. QCBM utilizes a multi-layer parameterized quantum circuit (MPQC) to evolve the initial/input quantum state $|\psi_0\rangle$ to some target state via unitary gates: $\left|\psi_{\theta}\right\rangle=U_{\theta}|\psi_0\rangle$, where $\theta$ are the parameters of the MPQ. One measures the outputs state in the computational basis to produce a classical sample (bit string) $x \sim p_{\boldsymbol{\theta}}(x)=\left|\left\langle x | \psi_{\theta}\right\rangle\right|^{2}$.

Excitingly, the output probability densities of a general quantum circuit cannot be efficiently simulated by classical means, the QCBM is among the several proposals to show quantum supremacy [ref]. QCBM is conventionally trained by minimizing the maximum mean discrepancy (MMD) loss using a Gaussian Kernel:
\begin{equation}
    \begin{aligned}
\mathcal{L}&=\underset{x \sim p_{\theta}, y \sim p_{\theta}}{\mathbb{E}}[K(x, y)]-2 \underset{x \sim p_{\theta}, y \sim p^*}{\mathbb{E}}[K(x, y)]+\underset{x \sim p^*, y \sim p^*}{\mathbb{E}}[K(x, y)].
\end{aligned}
\end{equation}

In this tutorial, we use a simple shallow parameterized quantum circuit similar in Section \ref{sec:multicircuit}. The target distribution is a normalized gray-scale MNIST digit. Note that the circuit expressiveness can be improved by other connection schemes which we do not particularly study in this tutorial.
\begin{lstlisting}[language=Python]
# Load NFNet modules:
from Utils_torch_version import Network, get_nn_pairs, binary_basis, unpacknbits, initialize_sparse, \
    mnist_single_pdf, mix_rbf_kernel, kernel_expect
import numpy as np
import matplotlib.pyplot as plt
import scipy.linalg
import time
import torch # PyTorch is required for this demo
import qiskit # Qiskit is required for this demo
import math
import seaborn as sns

pdf_data_img,_ = mnist_single_pdf(0, 9, (5,5)) # Use NFNet inbuilt util function to load a mnist normalized grayscale image
plt.imshow(pdf_data_img)
# Then, flatten out the distribution into 1d tensor and create the target data distribution:
pdf_data = torch.zeros(2**5, dtype=torch.float)
pdf_data[:25] = torch.tensor(pdf_data_img.flatten(), dtype=torch.float)
\end{lstlisting}
We then set up the Born machine learning problem, where we use the ``NFNet.Network'' object as the quantum ansatz. Further more, as ``NFNet'' support PyTorch-based auto-differentiation, directly use the PyTorch Adam optimizer to update the circuit parameters.
\begin{lstlisting}[language=Python]
n_fermions = 5
factor = 10
N = factor * n_fermions
basis_m_n = binary_basis(geometry=(n_fermions,)) # The basis in the probability space

conn_list = [[i, N-1] for i in range(N-1)]*10
print('conn_list', conn_list)
L = len(conn_list) # Number of layers

# 101010...10
x_input = torch.tensor([[1]*(N//factor)+[0]*(N-N//factor)]) if N%2==0 else torch.tensor([[1,0]*int(N/2)+[1]])

# Initialize the circuit
circuit = Network(conn_list, N)
print('circuit.parameters()', circuit.parameters())

# Initialize the circuit
circuit = Network(conn_list, N)

beta1 = 0.5
lr_G = 1e-2
optimizerG = torch.optim.Adam(circuit.parameters(), lr=lr_G, betas=(beta1, 0.999)) #The parameters are the th

# MMD loss by tracking the full probability space [0.5, 0.1, 0.2, 0.25,4,10]
K = torch.tensor(mix_rbf_kernel(basis_m_n, basis_m_n, sigma_list=[2e-6]), dtype=torch.float)
def exact_mmd(pdf_data, pdf_model): #input are tensors
    p_diff = pdf_data-pdf_model # Although this puts a constant term to the loss value, it is easier to code this way
    return kernel_expect(K, p_diff, p_diff)

n_space = basis_m_n.shape[0] # The number of total probability patterns
batchsize = 1024
num_batches = n_space//batchsize if n_space%batchsize==0 else n_space//batchsize+1

for itr in range(500): 
    pdf_model = torch.zeros(basis_m_n.shape[0]) # To keep full pdf for exact calculation of the MMD loss
    circuit.zero_grad() # clear the parameter gradients
    obj = torch.tensor(0.0)
    # Loop through all possible patterns in the total probability space
    for i in range(num_batches):
        y_batch = basis_m_n[i*batchsize:(i+1)*batchsize]
        x_batch = x_input.repeat_interleave(y_batch.shape[0], axis=0)
        sub_mask_batch = (torch.tensor([ [1]*(N//factor)+[0]*(N-N//factor) ])).repeat(y_batch.shape[0], 1) # Measure the first half of the qubits

        probs_batch = circuit.forward_partial_observation(y_batch, x_batch, sub_mask_batch)
        # Only keep the real part, as all information goes to the real part
        probs_batch = probs_batch.real
        #print('probs_batch', probs_batch)
        pdf_model[i*batchsize:(i+1)*batchsize] = probs_batch # Gradients carry through, as this directly goes into the loss calculation

    obj = exact_mmd(pdf_data, pdf_model)
    obj.backward() # Calculate parameter gradients
    optimizerG.step() # Update parameters
    print('iter, obj', (itr, obj))

    if itr % 20 == 0:
        plt.figure()
        plt.imshow(pdf_model[:25].detach().numpy().reshape(5,5))
        plt.savefig('img/img_iter_'+str(itr)+'.png')
        plt.imsave('img/img_iter_'+str(itr)+'.png',pdf_model[:25].detach().numpy().reshape(5,5))
        plt.show()
\end{lstlisting}
We present the learning result as follows. 

\begin{figure}[h!]
  \centering
  \begin{minipage}[b]{0.2\textwidth}
    \includegraphics[width=\textwidth]{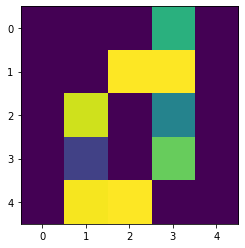}
  \end{minipage}
  \hfill
  \begin{minipage}[b]{0.2\textwidth}
    \includegraphics[width=\textwidth]{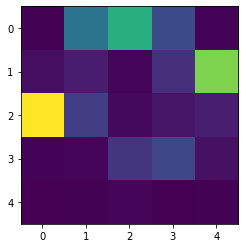}
  \end{minipage}
  \hfill
  \begin{minipage}[b]{0.2\textwidth}
    \includegraphics[width=\textwidth]{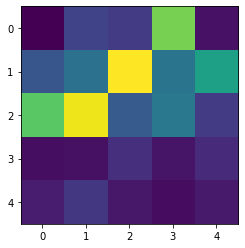}
  \end{minipage}
  \hfill
  \begin{minipage}[b]{0.2\textwidth}
    \includegraphics[width=\textwidth]{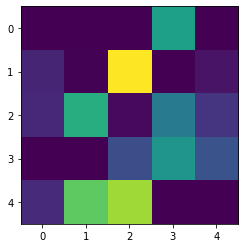}
  \end{minipage}
  \caption{Left to right: Target Distribution; randomly initialized output distribution; learnt circuit output distribution at iteration $10$; learnt circuit output distribution at iteration $50$.}
\end{figure}
\begin{figure}[h!]
    \centering
    \includegraphics[scale=0.3]{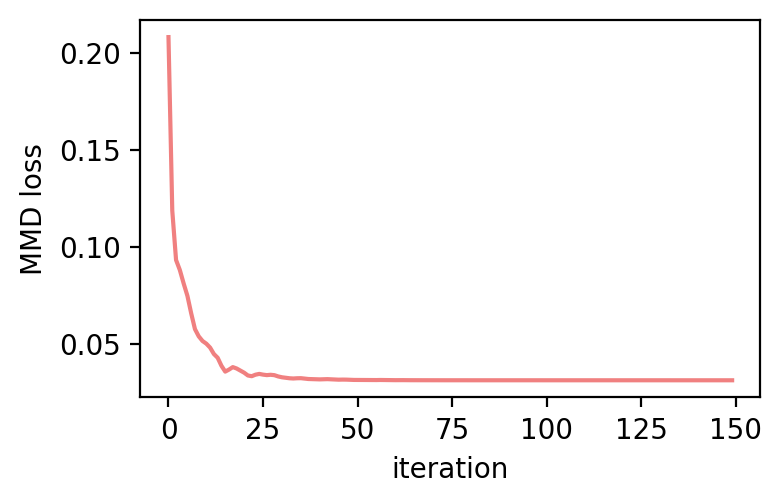}
    \caption{The MMD loss versus iteration number.}
    \label{fig:my_label}
\end{figure}

\subsection{Weighted-edge Maxcut} \label{subsec:maxcut}
The Maxcut objective can be expressed by a classical Hamiltonian (with measurements only in Z basis). Such Hamiltonians nicely fit in the free Fermion formulation. In this example, we showcase how to use NFNet to solve a MaxCut problem, whose Hamiltonian is:
\begin{equation}
    H_{maxcut}=\frac{1}{2} \sum_{i<j} c_{ij}(I - Z_{i}Z_{j}).
\end{equation}

In this example, we use a simple quantum ansatz modeled by a multi-layer parametrized quantum circuit with related NFNet classes and functions.

We first set up the graph and get the exact optimal solution via a classical algorithm provided by Qiskit.
\begin{lstlisting}[language=Python]
    
from Utils_torch_version import Network, get_nn_pairs, binary_basis, initialize_sparse

import numpy as np
import matplotlib.pyplot as plt
import scipy.linalg
import time
import torch # PyTorch is required for this demo
import qiskit # Qiskit is required for this demo
import math
import seaborn as sns
# Other packages required in this example:
import torch
import matplotlib.pyplot as plt
import qiskit
import numpy as np
import networkx as nx
from qiskit_optimization.applications import Maxcut, Tsp
from qiskit_optimization.algorithms import MinimumEigenOptimizer
from qiskit.algorithms import VQE, NumPyMinimumEigensolver

n_fermions = 12 # 12 nodes in graph
n_show_qubits = n_fermions

connections = [(5,0), (0, 1), (1, 2), (2, 3), (3, 4)] # NOT NECESSARY
#======================================Create p_data==========================================
hdim = 2**n_show_qubits
G = nx.Graph()
G.add_nodes_from([i for i in range(n_show_qubits)])

edges = []
for i in range(n_show_qubits):
    for j in range(n_show_qubits):
        if j > i:
            edges.append((i,j, round(np.random.rand(), 2) ))
G.add_weighted_edges_from(edges)

colors = ["r" for node in G.nodes()]
pos = nx.spring_layout(G)


def draw_graph(G, colors, pos):
    plt.figure(3,figsize=(12,12)) 
    default_axes = plt.axes(frameon=True)
    nx.draw_networkx(G, node_color=colors, node_size=600, alpha=0.8, ax=default_axes, pos=pos)
    edge_labels = nx.get_edge_attributes(G, "weight")
    nx.draw_networkx_edge_labels(G, pos=pos, edge_labels=edge_labels)


draw_graph(G, colors, pos)
# Computing the weight matrix from the random graph
W = np.zeros([n_show_qubits, n_show_qubits])
for i in range(n_show_qubits):
    for j in range(n_show_qubits):
        temp = G.get_edge_data(i, j, default=0)
        if temp != 0:
            W[i, j] = temp["weight"]

# Solve for the exact solution with classical method
max_cut = Maxcut(W)
qp = max_cut.to_quadratic_program()
print(qp.prettyprint())

# solving Quadratic Program using exact classical eigensolver
exact = MinimumEigenOptimizer(NumPyMinimumEigensolver())
result = exact.solve(qp)
print(result.prettyprint())
x = result.x
colors = ["r" if x[i] == 0 else "c" for i in range(n_fermions)]
draw_graph(G, colors, pos)
\end{lstlisting}
We then set up the circuit ansatz and define the objective function:
\begin{lstlisting}[language=Python]
factor = 2
N = factor * n_show_qubits # Total qubits
basis_m_n = binary_basis(geometry=(n_show_qubits,)) # The basis in the probability space

conn_list = [ [np.min(p), np.max(p)] for p in get_nn_pairs(geometry=(N,))]*10

print('conn_list', conn_list)
L = len(conn_list) # Number of layers

# 101010...10
x_input = torch.tensor([[1,0]*int(N/2)]) if N%2==0 else torch.tensor([[1,0]*int(N/2)+[1]])

# Initialize the circuit
circuit = Network(conn_list, N)
print('circuit.parameters()', circuit.parameters())

# Initialize the circuit
circuit = Network(conn_list, N)

beta1 = 0.9
lr_G = 1e-1
optimizerG = torch.optim.Adam(circuit.parameters(), lr=lr_G, betas=(beta1, 0.999)) #The parameters are the th

# pre-conpute the weight matrix W (symmetric) to save time
def maxcut_obj(x, G, W):
    obj = 0
    for i, j in G.edges():
        if x[i] != x[j]:
            obj -= W[i][j]
            
    return obj

# pl is the probability list (normalized or not), either exact or with sampling error
def compute_expectation(counts, x_strings, G, W):
    avg = 0
    sum_count = 0
    for i in range(len(counts)):
        bitstring = x_strings[i]
        count = counts[i]
        obj = maxcut_obj(bitstring, G, W)
        avg += obj * count
        sum_count += count
        
    return avg/sum_count
\end{lstlisting}
Then the optimization process is similar as before:
\begin{lstlisting}[language=Python]
n_space = basis_m_n.shape[0] # The number of total probability patterns
batchsize = 1024*10
num_batches = n_space//batchsize if n_space%batchsize==0 else n_space//batchsize+1
obj_list = []
for itr in range(20): # At each iteration, measure the kl divergence and update
    #probs_theta_valid_samples = torch.zeros(bas_samples.shape[0]) # to collect the model probabilities at valid patterns
    pdf_model = torch.zeros(basis_m_n.shape[0]) # To keep full pdf for exact calculation of the MMD loss

    ts = time.time()
    circuit.zero_grad() # clear the parameter gradients
    obj = torch.tensor(0.0)
    # Loop through all possible patterns in the total probability space
    for i in range(num_batches):
        y_batch = basis_m_n[i*batchsize:(i+1)*batchsize]
        x_batch = x_input.repeat_interleave(y_batch.shape[0], axis=0)
        sub_mask_batch = (torch.tensor([ [1]*(N//factor)+[0]*(N-N//factor) ])).repeat(y_batch.shape[0], 1) # Measure the first half of the qubits

        probs_batch = circuit.forward_partial_observation(y_batch, x_batch, sub_mask_batch)
        # probs_batch = torch.conj(amps_batch)*amps_batch # this is still torch.cfloat
        # Only keep the real part, as all information goes to the real part
        probs_batch = probs_batch.real
        #print('probs_batch', probs_batch)
        pdf_model[i*batchsize:(i+1)*batchsize] = probs_batch # Keep track of the gradient, as this directly goes into the loss calculation

    tf = time.time()
    # in the case of calculating the exact MMD loss, which cannot be written as a sum
    obj = compute_expectation(pdf_model, basis_m_n, G, W)
    print('iter, obj, time', (itr, obj, tf-ts))
    obj.backward()
    optimizerG.step()
    
    obj_list.append(obj)
\end{lstlisting}

\begin{figure}[h!]
  \centering
  \begin{minipage}{0.32\textwidth}
    \includegraphics[width=\textwidth]{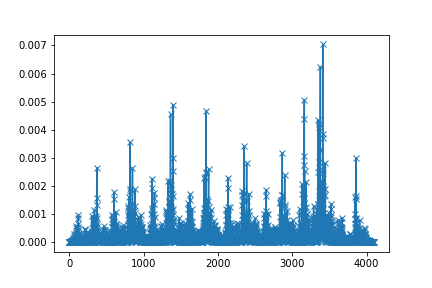}
  \end{minipage}
  \hfill
  \begin{minipage}{0.32\textwidth}
    \includegraphics[width=\textwidth]{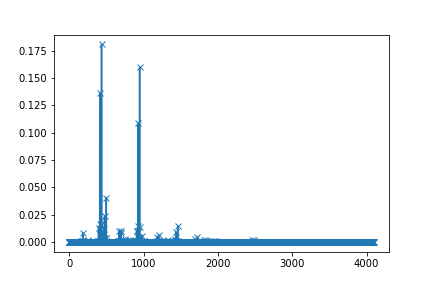}
  \end{minipage}
  \hfill
  \begin{minipage}{0.32\textwidth}
    \includegraphics[width=\textwidth]{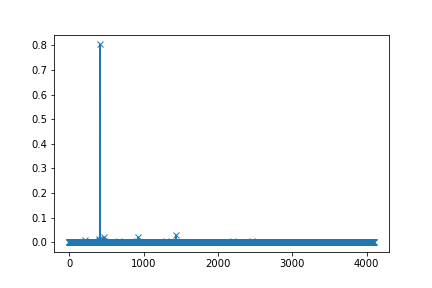}
  \end{minipage}
  \caption{Probability distribution during training. Left to right: random initialization; iteration 11; iteration 20. At iteration 11, a transition happens that shifts from one peak to the other.}
\end{figure}

\begin{figure}[h!]
    \centering
    \includegraphics[scale=0.2]{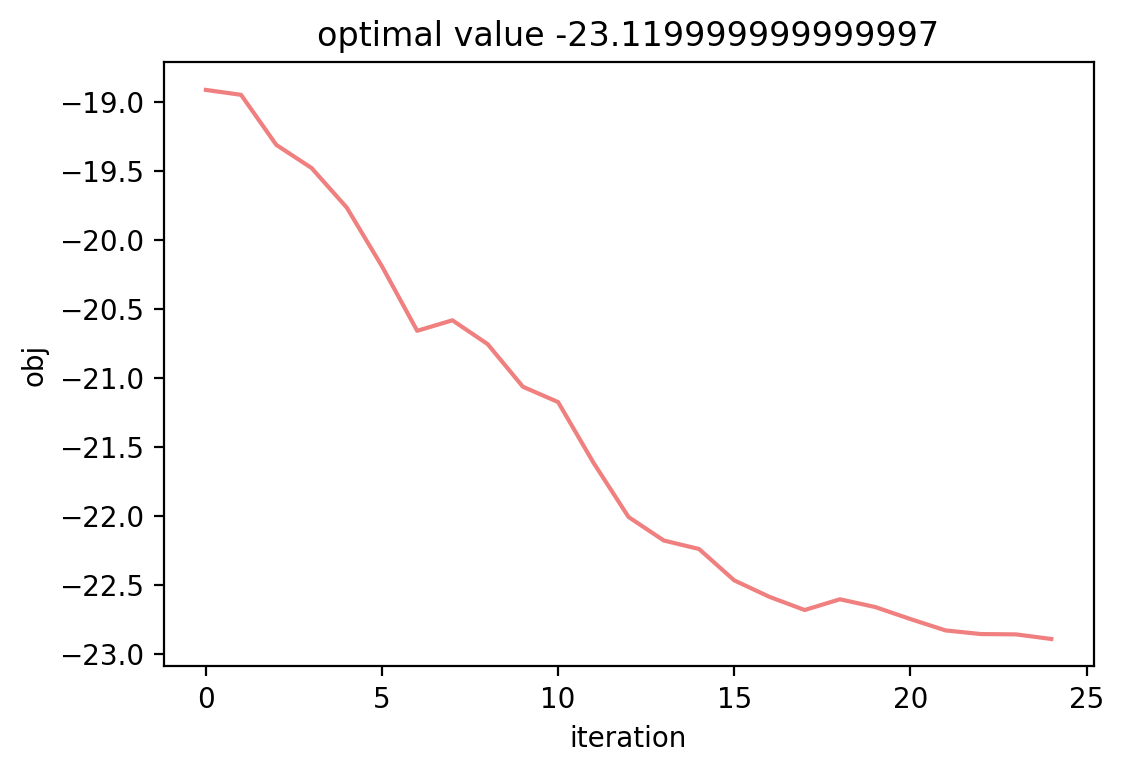}
    \caption{Maxcut objective function vs iteration number}
    \label{fig:my_label}
\end{figure}
And the found (most likely to output) graph is indeed the optimum (as computed via classical algorithm) Fig. \ref{fig:maxcut}.
\begin{figure}[h!]
    \centering
    \includegraphics[scale=0.5]{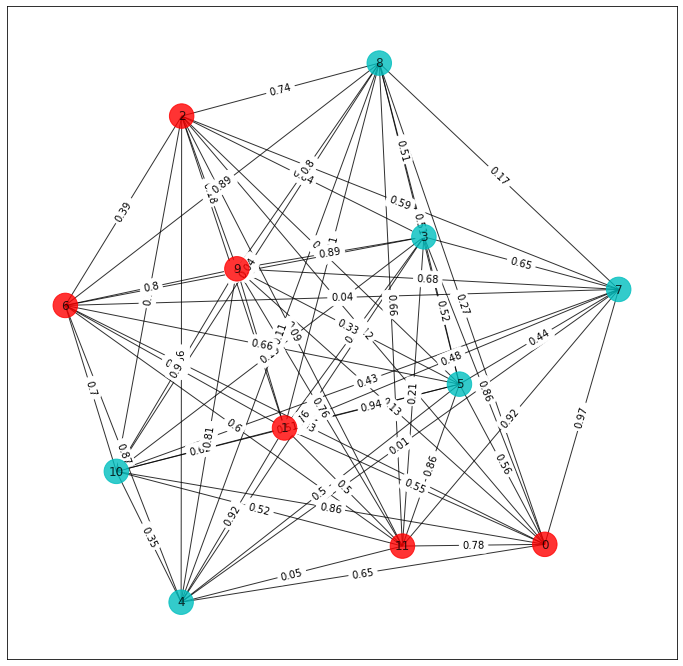}
    \caption{Learnt partition after 25 iterations.}
    \label{fig:maxcut}
\end{figure}

\section{Conclusion and Future Extensions} \label{sec:conclusion}
We present the mathematical foundation and design principles of NFNet (Non-interacting Fermion Net) \cite{NFNet-github}, an open-source software platform for large-scale classical simulation of continuously controlled quantum systems, based on an antisymmetric matrix decomposition that relates a special set of unitary gates to non-interacting Fermions. NFNet computes the probability (P(y|x)) of full or partial measurement (y) on a N-qubit quantum circuit in polynomial time of N. We proved that the strong simulability of NFNet goes beyond the $2$-local quantum circuit setting and extends to dense Hamiltonians written as a sum of arbitrary two-mode Fermionic interactions. To enhance NFNet’s performance, we engineered parallel processing of measurement simulations and continuous auto-differentiation of circuit/system parameters, which makes NFNet not only a classical quantum simulator, but also a quantum-mechanics-inspired machine learning framework. To benefit a broader research community, we presented NFNet tutorials on weighted-edge Maxcut, 196-qubit pattern recognition of MNIST hand-written digits, and measurement benchmarks with 512+ qubits with the interactive IPython environment runnabled even on a personal laptop.

Classically, NFNet is a computing framework for normalized outputs on a discrete support, as the inherent probabilistic nature of quantum system naturally normalizes the output probability densities. Delving into the basic arithmetic operators, we develop intuitive connections between NFNet and RNN, power iteration methods, and suggest its potential as a discrete-space normalizing flow model. Future directions include improving the code structure of NFNet, enriching its functionalities, and studying its furture connections to known computational models.

\newpage
\bibliographystyle{unsrt}
\bibliography{references}
\end{document}